\newcommand{\ket}[1]{\left| #1 \right\rangle}
\newcommand{\bra}[1]{\left\langle #1 \right|}
\newcommand{\pd}{\partial}
\newcommand{\Ran}{\operatorname{\mathrm{Ran}}}
\newcommand{\Ker}{\operatorname{\mathrm{Ker}}}
\newcommand{\ran}{\Ran}
\renewcommand{\ker}{\Ker}
\renewcommand{\d}{\mathrm{d}}
\def\tr{\mathrm{Tr}}
\def\slim{\mathop{\mathrm{s-}}\!\lim}
\newcommand{\laplace}{\triangle}
\newcommand{\e}{\varepsilon}
\newcommand{\dbar}{\kern-.1em{\raise.8ex\hbox{ -}}\kern-.6em{d}}
\def\id{\mathbbm{1}}
\def\lin{{\cal L}}
\def\qran{{\cal Q}}
\def\pker{{\cal P}}
\newtheorem{thm}{Theorem}%[section]
\newtheorem{prop}[thm]{Proposition}
\newtheorem{exa}{Example}
\newtheorem{rem}{Remark}
\newtheorem{assumption}{Assumption}
\newcommand{\comment}[1]{}
\def \be{\begin{equation}}
\def \ee{\end{equation}}
\def \bea{\begin{eqnarray}}
\def \eea{\end{eqnarray}}
\newcommand{\red}[1]{{\color{red} #1}}
\definecolor{light}{gray}{.75}
\begin{document}

\title{Adiabatic response  for Lindblad dynamics }
%%%%%%%%
\author{J.E. Avron
\\
\small{Department of Physics, Technion, 32000 Haifa, Israel}
\\ M.~Fraas, G.M. Graf
\\
\small{Theoretische Physik, ETH Zurich, 8093 Zurich, Switzerland} }
%%%%%%%

\maketitle
\begin{abstract}
We study the adiabatic response of open systems governed by Lindblad evolutions. In such systems, there is an ambiguity in the assignment of observables to fluxes
(rates) such as velocities and currents.  For the appropriate notion of flux, the formulas for the transport coefficients are simple and explicit and are governed by the parallel transport on the manifold of instantaneous stationary states. Among our  results we show that the response coefficients of open systems, whose stationary states are projections, is given by the adiabatic curvature. 
\end{abstract}

%%%%%%%%%%%%%%%
\section{Introduction}

%%%%%%%%%%%%%%%%%%%%%%%

We are interested in extending the theory of adiabatic response of quantum systems undergoing unitary evolution \cite{BerryRobbins,Thouless83}  to open (quantum) systems governed by Lindblad evolutions. In particular, we are interested in a geometric interpretation of the response coefficients.

In open systems there is usually some choice in setting the boundary between the system and the bath. Setting the boundary fixes the tensor product structure  ${\cal H}_s\otimes {\cal H}_b$. 
%The boundary, however, does not resolve all ambiguities. 
Choosing a boundary still leaves a residual ambiguity in observables. For example, given a joint Hamiltonian $H$ of the system and the bath, there is no unique way of assigning to $H$ an observable of the form  $H_s\otimes \id$ describing the energy of the system alone.
\begin{exa}[Lamb shift]\label{lamb}
The interaction of an atom with the photonic vacuum has
two effects on the atom: It leads to decay  and  to ``Lamb shift'' of
the energy levels.  One can choose whether to incorporate the Lamb
shift in the energy of the atom or in its interaction with the bath.
\end{exa} 
%%%%%%%%%%%%%
%%%%%%%%%%%%%
Additional  ambiguity arises when considering the \emph{flux}
(rate) of an observable $X$. Any assignment $X\mapsto X_s$ 
of system observables $X_s$
to joint observables $X$ is incompatible with dynamics, if the bath and
the system interact. In fact it is generally impossible to satisfy
both requirements $X_s\otimes \id\mapsto X_s$ and $\dot
X=i[H,X]\mapsto\dot X_s=i[H_s,X_s]$.
%Even after one makes a choice for a system observable,
%$X=X_s\otimes\id$, an ambiguity remains in the assignment of
%observable 
%to the flux (rate) of $X$, namely, $\dot X=i[H,X_s\otimes \id]$. If
%the bath 
%and the system interact, there is, once again, no unique way to 
%assign $\dot X_s\otimes \id$ to $\dot X$.  

The ambiguity in fluxes is physical and plays a key role in this work.  Consider, for example,  damped harmonic motion.  By Newton, the flux of the momentum
% of the particle 
is the total force. This force is related to two other forces in this problem:
$$
\mbox{Momentum flux} = \mbox{Spring force} + \mbox{Friction force}.
$$
The momentum flux can be determined from the trajectory of the particle;  The spring force from the force acting on the spring anchor and the friction from the momentum transfer to  the bath.
All these forces have physical significance and are associated with different measurements. In this work we shall focus on observables that are the analog of the momentum flux.  This point of view has been emphasized in the works of \cite{Bel,Car}.

%%%%%%%%%%%%%%%%%%%%%%%%%%%%%%%%%%%%%%%%%%%%%%%%%%%%%
%A time honored strategy to study open systems is to start from a complete 
 We study open systems  described by Lindbladians  \cite{Davies,BreuerPet,GoriniKossakowski,Lindblad76}. This framework is usually viewed as giving a simplified, often effective, but  approximate description, of the Hamiltonian dynamics of a system interacting with a bath \cite{Spohn,DaviesSpohn, LebowitzSpohn, JaksicPillet}. 
Model Lindbladians can  be derived from  a Hamiltonian in the ``weak coupling limit'' provided the bath is memory-less (Markovian)  \cite{JoyePillet,Davies}. However, it is also possible to view Lindbladians from a broader perspective, and this is the point of view we  take in this work, namely  as the infinitesimal generators of state preserving maps \cite{Davies}. As such, they provide a natural description of general quantum evolutions, in their own right. 

%%%%%%%%%%%%%%%%%%%%%%%%%%%%%%%%%%%%%%%%%%%%%%%%%%%%%

The Lindblad operator, denoted $\lin$,  is made of a self-adjoint $H$
representing the ``energy'' of the system and a collection of operators,
$\{\Gamma_\alpha\}$, representing the coupling to the bath.  The notion of
``energy'' is, as we have noted in Example \ref{lamb}, ambiguous and this  is manifested in the non-uniqueness of $\{H, \Gamma_\alpha\}$.  
%%%%%%%%%%
%%%%%%%%%%%%
A choice  $\{H,\,\Gamma_\alpha\}$ will be called a \emph{gauge}. Different gauges generate the same dynamics.

We shall consider parametrized Lindbladians, $\lin_\phi$, where the (classical) parameters $\phi\in\cal M$ are viewed as controls. This means that $\{H(\phi),\Gamma_\alpha(\phi)\}$ are functions of the controls\footnote{The functional dependence of a super operator it is indicated by its subscript; that of an operator or a state by its argument.}.  $\cal M$ is the control space, see Fig.~\ref {fig:controlSpace}. 
%%%%%%%%%%%%%%
%%%%%%%%%%%%%%%%%%

The main focus of this work is the development of an
adiabatic\footnote{The notion of adiabaticity is contingent on a gap
  condition, Assumption \ref{ass:gap} below.} response theory for the fluxes; Namely, observables of the form $\dot X=\lin^*_\phi X$.  This notion  is gauge invariant (independent of the choice $\{H,\Gamma_\alpha\}$). It turns out that the adiabatic response of such observables has several simplifying features.

A different perspective on the choice of observables comes from a gauge invariant formulation of the \emph{principle of virtual work}.
%We shall introduce a {\em principle of virtual work} for Lindbladians. 
For isolated systems the principle of virtual work assigns the observable $\partial_\mu H$ with the variation of the $\mu$-th control $\phi^\mu$. Since $\delta H$ is gauge dependent in the Lindblad setting,
formulating a gauge invariant notion of the  principle of virtual work requires the joint variation $\{\delta H,\delta \Gamma_\alpha\}$.   
%It turns out that there is an intimate relation between fluxes and (gauge invariant) virtual work.  

\begin{figure}[htbp]
\begin{center}
\includegraphics[width=0.5\textwidth]{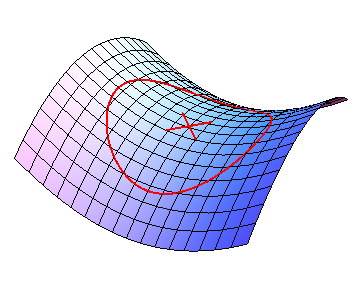}
\caption{The surface represents the control space $\cal M$ and the closed curve represents a closed path  in the space of controls.  The cross denotes a puncture in $\cal M$ where the adiabatic theory fails  and the manifold of stationary states is singular. }
\label{fig:controlSpace}
\end{center}
\end{figure}

%The main focus of this work is the development of a the adiabatic\footnote{The notion of %adiabaticity is contingent on a gap condition, Assumption \ref{ass:gap}.} response
%of fluxes 
%in Lindblad evolutions: We 
Consider a path in control space which is traversed adiabatically,  Fig.~\ref{fig:controlSpace}.
%, and study the evolution of an equilibrium (stationary) state. 
It is a feature of adiabatic evolutions \cite{AFGG} that stationary states evolve by parallel transport within  the manifold of (instantaneous) stationary states. 
%%%%%%
%%%%%%%%%%
The response of fluxes is special in that it is fully determined by the parallel transport of the stationary states, (Theorem \ref{thm:fluxes}).  This is the key to the geometric interpretation of the transport coefficients in linear response.

Parallel transport captures the geometric aspects of adiabatic evolution. However, as a practical method of calculation of transport coefficients, it suffers, in general, from its reliance on solving differential equations. 
%There are, however, %two important 
%%%%%%
%%%%%%%%%%%%
%families of Lindbladians where parallel transport can be determined algebraically, without %solving any differential equations. 
A simplification occurs  for  (generic) Lindbladians where the (instantaneous) stationary state is unique
%. It is also the case for a second family of Lindbladians, namely, 
and for dephasing Lindbladians  (where the stationary states coincide with the eigenstates of the Hamiltonian \cite{AFGG}).
In both cases parallel transport is determined algebraically, without recourse to solving differential equations.  As a consequence, the transport coefficients
% the analogs of Kubo formulas, 
are geometric and explicit.

In general, the response coefficients of open and closed systems are different. 
One would like to identify those transport coefficients that are immune to certain mechanism of decoherence and dephasing. 
For observables of the form $\lin^*X$  the response coefficients depend on the manifold of stationary states (but not on the underlying dynamics). 
Immunity then follows whenever
the stationary states are unaffected by decoherence and dephasing.
%the response is precisely the same. 
%In particular, the response coefficients are immune to opening the system both for
This is the case for two physically interesting families of Lindbladians:  Dephasing Lindbladians and Lindbladian which allow for decay to the (Hamiltonian) ground state.

%Elementary examples illustrate the features of the general theory.
%\red{tradeofs: Green functions vs generators}

\section{Lindbladians}
\label{sec:lindbladian}

The Lindblad (super)\footnote{Super operators will be denoted by script characters.} operator \cite{Davies,BreuerPet} is given formally by\footnote{The normalization differs by factor 2 from that of  \cite{Lindblad76}.} 
	\be
	 \lin\rho= -i [H,\rho] + {\cal D}\rho, \quad  {\cal D}\rho=\sum_\alpha 2 \Gamma_\alpha \rho \Gamma^*_\alpha
	- \Gamma^*_\alpha \Gamma_\alpha \rho - \rho \Gamma^*_\alpha \Gamma_\alpha,\quad %(H=H^*)
	\end{equation}
where the state $\rho$ is trace class. The Hamiltonian part $H$ is self-adjoint (and local). The $\Gamma_\alpha$ are essentially arbitrary, but finitely many for simplicity. 
Models describing exchange of energy involve non-self-adjoint $\Gamma_\alpha$ while models of measurement involve $\Gamma_\alpha$ which are spectral projections  (non-local in general).
  
$\lin$ is the generator of state and trace preserving contractions. The dual (super) operator $\lin^*$ acts on the space of bounded operators, this being the dual of the space of trace class operators. 
When ${\cal D}=0$ the evolution is unitary.
 %We always assume (normalizable) states, $\tr\rho=1$.
 To avoid technical difficulties with unbounded operators we shall  assume:

\begin{assumption}\label{bounded}
$H$ and $\Gamma_\alpha$ are bounded operators.
\end{assumption}
The assumption implies that a duality relation,
\begin{equation}
\label{eq:duality}
\tr(X \lin \rho) = \tr((\lin^*X) \rho)\,,
\end{equation}
holds for all states $\rho$ and bounded operators $X$.
We shall occasionally consider standard physical examples with unbounded operators. However all these 
examples are simple enough that one can check that formal manipulations are indeed justified. 
We shall study the time evolution of the state $\rho$: 
	\be\label{lindbladian0}
	\dot\rho=\lin\rho, \quad \Bigl(\dot\rho=\frac{d\rho}{dt}\Bigr). 
	\ee	
%%%%%%%%%%%
%%%%%%%%%%%%%%%
 It is convenient to introduce a notation that distinguishes stationary states from general states. We shall denote stationary states by $\sigma$, namely  $\lin\sigma=0$  (with $\tr \,\sigma=1$).  The (super) projection   on the stationary states shall be denoted by $\pker$, so $\pker \sigma=\sigma$.

\subsection{Gauge transformations}\label{sec:ambiguity}
 $\lin$  does  not determine $\{H,\Gamma_\alpha\}$. In fact, $\lin$ is invariant under the joint variation \cite{BreuerPet}
\begin{gather}
\label{gauge}
%(H,\Gamma_\alpha)\mapsto (\tilde H,\tilde\Gamma_\alpha)\\
 H\mapsto H+e\id-i\sum_{\alpha}(g_\alpha^*\Gamma_\alpha-g_\alpha\Gamma_\alpha^*), ~ \Gamma_\alpha\mapsto\Gamma_\alpha+ g_\alpha \id, ~ (g_\alpha \in \mathbb{C},\, e \in \mathbb{R}).
\end{gather}
Moreover, $\Gamma$ and ${\cal U}\Gamma$ represent the same Lindbladian when ${\cal U}$ is unitary in the sense that
	\be\label{gauge2}
	    ({\cal U}\Gamma)_\alpha=
	    \sum_\beta {\cal U}_{\alpha \beta} \Gamma_\beta,\quad {\cal U}^{-1}={\cal U}^*.
	 \ee
We shall refer to the freedom in  $\{H,\Gamma_\alpha\}$ as {\em gauge freedom}.
The observable $H$, which one would like to interpret as the energy of the system, is therefore ambiguous a priori\footnote{Interferometry allows  to compare the evolution in one arm of the interferometer with a different evolution in the other.  This can be used to fix some of  the gauge freedom. Interferometry for open system is described in \cite{Sjoqvist}. }. This ambiguity does not go away by considering explicit physical models weakly  coupled to a bath, as Example~\ref{lamb} shows. In the examples that we consider, we pick a natural gauge.  
%%%%%%%%%%%%%%
%%%%%%%%%%%%%%%

 \subsection{Lindbladians with a unique stationary state}
 \label{sec:unique}
%%%%%%%%%%%%
%%%%%%%%%%%%%%
 Generic finite dimensional Lindbladians have a unique stationary state $\sigma$.
The (super) projections $\pker$ on the  stationary state is given by
	\be\label{UGS}
	\pker\rho = \sigma\, \tr\, \rho, \quad (\tr\,\sigma=1).
	\ee
Evidently $\pker$ is trace preserving and $\pker^2=\pker$ (since $\tr
\,\sigma=1$). It is {\em not orthogonal}, not even formally. In fact,
the dual projection $\pker^*$, which acts naturally on observables,
is given by a different expression, 
	\be\label{NonOrth}
	\pker^*X= \id \cdot \tr(X\sigma).
	\ee
A basic identity we shall need is
	\be\label{LP}
	\lin\pker =0=\pker\lin.
	\ee
The first equality is evident. The second follows from the trace preserving property of $\lin$
	\[
	 \pker \lin \rho= \sigma\tr(\lin\rho)=0.
	\]	
We shall denote by  $\qran$ the complementary projection	
	\be
	\pker+\qran=\id.
	\ee
Evidently,
	\be\label{LQ}
	\lin\qran =\lin=\qran\lin.
	\ee
This will play a role in the sequel.

%%%%%%%%%%%%%%%%%%%%%%%%%%%%%%%%%%%%
\subsection{Dephasing Lindbladians}\label{sec:dephasing}
Dephasing Lindbladians are intermediate between Hamiltonians and Lindbladians with a unique stationary state. They are  characterized \footnote{When $H$ is simple, the characterization can be phrased as a commutation condition $[H,\Gamma_\alpha]=0$. The two characterizations differ e.g. in the case  $H=\id$ and the choice we make guarantees that dephasing Lindbladian share the stationary states with the Hamiltonian.   } by $\Gamma_\alpha = \Gamma_\alpha(H)$ for some functions $\Gamma_\alpha$. In particular,
\be\label{dephasing}
\lin P=0=i[H,P],
\ee
where  $P$ is a  spectral projection for $H$.

When $H$ is finite dimensional,
all spectral projections $P_j$ are finite dimensional and dephasing Lindbladians share the stationary states with the Hamiltonian. 
 The manifold of stationary states is then the span of the $P_j$. The (super) projections $\pker$ on this manifold and its complement $\qran$, are given by 
	\be\label{PD}
	\pker\rho= \sum_j P_j \rho P_j, \quad \qran\rho=\sum_{j\neq k}  P_j \rho P_k.
	%=\sum_j P_j \rho (P_j)_\perp.
	\ee  
$\pker$ and $\qran$ are orthogonal projections, in the sense that their adjoints are given by the same expressions. 
$\pker$ satisfies Eq.~(\ref{LP}) and $\qran$ satisfies Eq.~(\ref{LQ}).

%{\tt we later use this  in example where P is infinite} 
%%%%%%%%%%%%%%%%%%%%%%%%%%%%%%%%%%%%%%%%%%%

%%%%%%%%%%%%%%%%%%%%
\section{Stationary states}

 \begin{figure}[htbp]
\begin{center}
\includegraphics[width=0.5\textwidth]{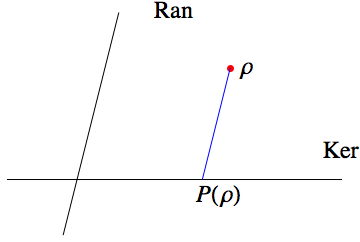}
\caption{The  (super) projection $\pker$ projects on $\Ker \lin$ along $\Ran \lin$.  The two spaces are transversal in the Lindblad setting. }
\label{fig:projection}
\end{center}
\end{figure}
Let us consider the (super) projections $\pker$ on the manifold of stationary states from a perspective that puts the special classes treated above in a uniform context.

It is a basic property of Lindblad operators that kernel and range are
transversal \cite{AFGG}, 
%
%Moreover the associated eigenspace is transversal to the range of $\lin$,
\begin{equation}
\label{geometricgap}
\ker \lin \cap \ran \lin  = \{0\}.
\end{equation}
This follows from $\exp(t\lin)$ being a contraction: $\lin^2 \rho = 0$ implies $\exp(t\lin)\rho = \rho + t\lin\rho$
and we conclude $\lin\rho = 0$.

%%%%%%%%%%
%%%%%%%%%%
This allows to define a projection on the direct sum $\ker\lin \oplus \ran\lin$ by
\begin{equation}
\label{GeoProjection}
\pker\rho = \left\{ \begin{array}{lcr}
			\rho & \mbox{when} & \rho \in \ker\lin \\
			0    & \mbox{when} & \rho \in \ran\lin.
		\end{array}
	\right.
\end{equation}
%A gap condition \cite{AFGG} states that $\pker$ uniquely extends to all states, provided %$0$ is a discrete eigenvalue. 

\begin{assumption}[Gap condition]\label{ass:gap}
\hfill
$0$ is an isolated point in the spectrum of $\lin$ and $\pker$ is given by the Riesz projection
	\be \label{PKer}
	\pker= \frac 1 {2\pi i} \oint \frac {dz} {z-\lin}, %\quad 
%	\qran= \frac 1 {2\pi i} \int_\gamma \frac {dz} {z-\lin}
	\ee
where the contour encircles $0$ but no further points of the spectrum
(see Fig.~\ref{fig:Riesz}). 

\end{assumption}

\begin{rem}
If $0$ is an eigenvalue of finite algebraic multiplicity, then the Riesz projection part is for free. The assumption is satisfied when $H,\,\Gamma_\alpha$ are finite dimensional.
%(In the case that $\dim\pker<\infty$ the Riesz projection part follows from the discreteness of the eigenvalue at $0$.) 
The assumption guarantees that 
%$\ker \lin$ is finite dimensional and that 
$\ran \lin$ is a closed subspace. 
\end{rem}
\begin{figure}[htbp]
\begin{center}
\includegraphics[width=0.3\textwidth]{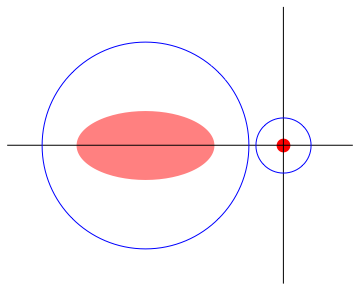}
\caption{Since $\lin$ is a contraction, its spectrum away from the origin is contained in the ellipsoid blob in the left half-plane. The origin is assumed to be  an isolated point in the spectrum. The circles are the integration contour for the Riesz projection $\pker$ and $\qran$.}
\label{fig:Riesz}
\end{center}
\end{figure}

The consistency of Eqs.~(\ref{GeoProjection}) and (\ref{PKer}) deserves a discussion.
In fact, the Riesz projection $\pker$ always satisfies the first line of Eq.~(\ref{GeoProjection}) and the validity of the second one is the core of the assumption. 
To see this consider, besides of $\pker$ given by Eq.~(\ref{PKer}), also $\qran$ similarly given in terms of a contour encircling the complementary part of the spectrum. Then
	\be\label{comm}
	\pker+\qran =\id, \quad [\pker,\lin]=[\qran,\lin]=0,
	\ee
proving the first line.
Now assuming the second line of Eq.~(\ref{GeoProjection}),
the eigenspace associated to $0$ has a trivial Jordan block. 
 This means that the Laurent expansion of the resolvent does not have a $z^{-2}$ term and hence
	\be\label{PL}
	\pker\lin =\frac 1 {2\pi i} \oint dz \frac {z}{z-\lin} =0
	\ee
and\footnote{We recall that ${(\cdot )}^*$ is the Banach space notion of dual.}
	\be\label{QL}
	 \qran\lin=\lin, \quad \lin^*=\lin^*\qran^*.
	\ee
This places Eqs.~(\ref{LP}, \ref{LQ}) into their general natural context.
 
\begin{exa}[Gapless Lindbladians]
\label{exa:ConSpec}
It may happen that $H$ is gapped and $\lin$ is gapless. For example, let $H$ be a Hamiltonian with a ground state separated by a gap from a continuous spectrum. Then the associated 
Lindbladian $\lin \rho = -i[H,\rho]$ has the eigenvalue 0  embedded in the continuous spectrum.

\end{exa}

\section{Controlled Lindbladians}\label{sec:control}

The geometric aspects emerge when one turns one's attention to a parame\-tri\-zed family of Lindbladians  $\lin_\phi$. We shall call the parameters $\phi\in\cal M$ {\em control} and   $\cal M$ the control space. This makes the Hamiltonian $H(\phi)$ and the coupling to the bath $\Gamma_\alpha(\phi)$ functions of the controls. The explicit form of these functions is, of course,  model specific.

\begin{assumption}[Controlled Lindbladians]\label{ass:smoothness}\hfill

\begin{enumerate}
\item[(A)] The Lindbladian $\lin_\phi$ is a bounded (super) operator which is a smooth function of the controls $\phi$.
\item[(B)] The gap condition, Assumption~\ref{ass:gap}, holds for all $\lin_\phi$. 
%\item[(C)] $\sigma$ is a function on $\cal{M}$.
\end{enumerate}
\end{assumption}

We shall call the stationary states of $\lin_\phi$ the {\em instantaneous stationary states}. %(and occasionally also equilibrium states). 
%%%%%%%%%%%

%%%%%%%%%%%%%%%%%
\subsection{Iso-spectral Lindbladians}\label{sec:unitaries}

A distinguished family of controlled Lindbladians is the family of iso-spectral Lindbladians  given by the action of unitaries on $H$ and $\Gamma$:
	\be\label{isoLin}
	H(\phi)=U(\phi) H U^*(\phi), \quad \Gamma_\alpha(\phi)=U(\phi) \Gamma_\alpha U^*(\phi).
	\ee
The Lindbladian describing  a harmonic oscillator coupled to a thermal bath, whose anchoring point is  controlled, is an example: 
\begin{exa}[Controlled oscillator in thermal contact] \label{exa:decay}
	A Harmonic oscillator, anchored at the origin and coupled to a heat bath, is described by the Lindbladian 
	\be
	H=a^*a,\quad \Gamma_-= \sqrt\gamma_- a, \quad\Gamma_+= \sqrt\gamma_+ a^*, \quad  (\gamma_->\gamma_+>0) 
	\ee	
and where $\sqrt 2 a={x+ip}$. The stationary state of the oscillator is a thermal state with $\beta=\log (\gamma_-/\gamma_+)$ \cite{BreuerPet}. The Harmonic oscillator with controlled anchoring point is described by  the iso-spectral family with $U(\phi)= e^{-i p\phi}$. Explicitly
	\be\label{OscShift}
	\sqrt 2\, a(\phi)=\sqrt 2\, U(\phi)\, a\, U^*(\phi)=(x-\phi) +ip.
	\ee
Since the $\Gamma$'s adjust to $H$ the oscillator wants to relax to the thermal state of the instantaneous Hamiltonian.
 \end{exa}
%%%%%%%%%%%%%%%%%%%%

\subsection{Parallel transport}

%%%%%%%%%%%%%%%%%%%%%%%%%%%
 We shall denote instantaneous stationary states by $\sigma$.  By definition $\pker\sigma=\sigma$ (we allow $\dim\pker\ge 1$). By Assumption~\ref{ass:smoothness}  the projection on the stationary states, $\pker_\phi$, is a smooth projection on control space and $d\pker_\phi$ is a bounded operator valued form. For notational convenience, we henceforth suppress the explicit $\phi$ dependence and write $\pker$ for $\pker_\phi$ etc.
%We shall  need results from \cite{AFGG} which we recall. 
%%%%%%%%%%%

The differential of  $\sigma=\pker\sigma$ gives the identity  $d\sigma= (d\pker)\sigma+ \pker d\sigma$. Since $\pker$ is a projection $\pker (d\pker) \pker=0$ and consequently the projection $\qran d\sigma=(d\pker)\sigma$ is determined while $\pker d\sigma$ is not.
%%%%%%
%%%%%%%%
 Parallel transport is the requirement, given in two equivalent forms, 
\be\label{parTran}
\pker d\sigma =0 , \quad d\sigma=(d \pker)\sigma.
\ee
This evolution of $\sigma$ is naturally
%%%%%
%%%%%%%%
 interpreted geometrically as parallel transport\footnote{For a different perspective which focuses on the analogs of Berry's phase see e.g. \cite{GefenMakhlin}.   }:  There is no motion in $\pker$.
The case $\dim\pker=1$
% ,which is generic in the finite dimensional situation,
 is a special simple case, in that there is a unique state $\sigma=\sigma(\phi)$ in the range of $\pker(\phi)$. It solves Eq.~(\ref{parTran}) without further ado.

\begin{prop}\label{prop:trace}
The form $d \sigma$ is trace class.
\end{prop}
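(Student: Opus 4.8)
The plan is to reduce the statement to the parallel transport identity and then to boundedness of $d\pker$ on the trace-class operators. First I would use parallel transport, Eq.~(\ref{parTran}): differentiating $\sigma=\pker\sigma$ gives $d\sigma=(d\pker)\sigma+\pker\,d\sigma$, and the transport condition $\pker(d\sigma)=0$ leaves
\be
d\sigma=(d\pker)\,\sigma .
\ee
Since $\sigma$ is a stationary \emph{state} it is trace class, with $\tr\,\sigma=1$. Hence the whole claim follows once I show that $d\pker$ is a bounded operator on the Banach space $\mathcal{T}$ of trace-class operators: applying such an operator to $\sigma\in\mathcal{T}$ keeps the result in $\mathcal{T}$, with $\|d\sigma\|_1\le\|d\pker\|\,\|\sigma\|_1$.

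The key is to fix the Banach space on which the (super)operators act. The natural home of the dynamics is $\mathcal{T}$ equipped with the trace norm $\|\cdot\|_1$, since $\exp(t\lin_\phi)$ is by construction a trace-preserving contraction on states, hence on $\mathcal{T}$. Under the weak version of Assumption~\ref{bounded} the generator $\lin_\phi$ is bounded on $\mathcal{T}$: for bounded $H$ and $\Gamma_\alpha$ one has, using that trace class is an ideal, estimates such as $\|[H,\rho]\|_1\le 2\|H\|\,\|\rho\|_1$ and $\|\Gamma_\alpha\rho\Gamma_\alpha^*\|_1\le\|\Gamma_\alpha\|^2\|\rho\|_1$. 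By Assumption~\ref{ass:smoothness}(A) this dependence on $\phi$ is smooth in the operator norm on $\mathcal{T}$, so the resolvent $(z-\lin_\phi)^{-1}$ is bounded on $\mathcal{T}$ and smooth in $\phi$ for $z$ on the Riesz contour of Assumption~\ref{ass:gap}.

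It then follows from the Riesz formula (\ref{PKer}) that $\pker_\phi$, and with it the form $d\pker$, is a bounded operator \emph{on $\mathcal{T}$}, and not merely on the algebra of bounded operators. Applying this bounded operator to the trace-class element $\sigma$ yields $d\sigma=(d\pker)\sigma\in\mathcal{T}$, which is the assertion. I expect the only genuine subtlety — the main obstacle — to be precisely this choice of topology: boundedness of $d\pker$ in the operator norm on $B(\hil)$ would only give that $d\sigma$ is a bounded operator, so one must verify boundedness with respect to the trace norm, which is exactly what the contraction property of the Lindblad semigroup on $\mathcal{T}$ supplies.
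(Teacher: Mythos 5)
Your proof is correct and follows essentially the same route as the paper's: the paper's own (one-line) argument is precisely that $d\sigma=(d\pker)\sigma$ from Eq.~(\ref{parTran}), that $\sigma$ is trace class, and that $d\pker$ is a bounded operator-valued form, which in context means bounded on the Banach space of trace-class operators. Your additional care in verifying that the Riesz formula (\ref{PKer}) yields boundedness and smoothness of $\pker_\phi$ in the trace-norm topology (rather than merely on $B(\hil)$) is exactly the point the paper leaves implicit, so there is nothing to correct.
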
 
\begin{proof} Follows from Eq.~(\ref{parTran}), the fact that $\sigma$ is trace class and that $d\pker$ is bounded by Assumption~\ref{ass:smoothness}. 
\end{proof}

%%%%%%%%%%%%%%%%%%%%%%%%%%%%%%%%%%%%%%%%%%%%%%%%%%%%%%

\subsection{Holonomy of parallel transport}\label{sec:holonomy}

In general, parallel transport, Eq.~(\ref{parTran}), does {\em not} integrate to a function on control space $\cal M$ unless the curvature vanishes: $\pker d\pker\wedge d\pker\pker=0$ (see Appendix~\ref{appendix:a}). If such a function $\sigma=\sigma(\phi)$ exists, it will be called an {\em integral of parallel transport}. This is, of course, automatic if either $ {\cal M}=\mathbb{R}$, or $\dim\pker=1$, (see Eq.~(\ref{UGS})). %In certain cases, the special structure of the manifold of stationary states guarantees path %independence.  

Parallel  transport is consistent with the convex structure of stationary states \cite{AFGG}.  As a consequence it preserves  extremal stationary states. 
Recall that a (stationary) state is called {\em extremal}~  if it can not be written as a convex combination of two other (stationary) states.  For such extremal states we have:

\begin{prop}[Parallel transport of extremal states]
\label{extreme}
The parallel transport equation takes extremal stationary state to an extremal stationary state. If, moreover the manifold of (instantaneous) stationary states is a simplex, spanned by a finite number of isolated extremal states $\sigma_j(\phi)$, then the function on $\cal M$ 
	\be\label{eq:sigmaF}
	\sum p_j \sigma_j(\phi)
	\ee
with $p_j$ independent of $\phi$, is an integral of parallel transport. 
\end{prop}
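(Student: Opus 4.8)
The plan is to treat the two assertions in turn, using throughout that parallel transport along a path $\gamma$ defines a \emph{linear} propagator $T_\gamma$, namely the solution operator of the linear equation $\dot\sigma=(d\pker)(\dot\gamma)\,\sigma$ from Eq.~(\ref{parTran}). Since this ODE is linear, $T_\gamma$ is a linear isomorphism from $\Ran\pker$ at the initial point onto $\Ran\pker$ at the final point, its inverse being transport along the reversed path. It preserves the trace: differentiating the trace-preserving identity $\tr\pker\rho=\tr\rho$ gives $\tr(d\pker)\rho=0$, hence $\tr d\sigma=\tr(d\pker)\sigma=0$. Together with the consistency of parallel transport with the convex structure of stationary states quoted above (\cite{AFGG}), this says that $T_\gamma$ is an \emph{affine bijection} of the convex set of stationary density matrices at one end onto that at the other.

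For the first assertion I would argue that an affine bijection carries extreme points to extreme points. Concretely, if $\sigma$ is extremal but $T_\gamma\sigma=\lambda\tau_1+(1-\lambda)\tau_2$ with $0<\lambda<1$ and $\tau_1\neq\tau_2$ stationary states, then applying the affine inverse $T_\gamma^{-1}$ would exhibit $\sigma=\lambda\,T_\gamma^{-1}\tau_1+(1-\lambda)\,T_\gamma^{-1}\tau_2$ as a nontrivial convex combination of two distinct stationary states, contradicting extremality. Hence $T_\gamma\sigma$ is extremal, which is exactly the claim that parallel transport takes extremal stationary states to extremal stationary states.

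For the second assertion, assume the simplex hypothesis with its finitely many isolated extremal sections $\sigma_j(\phi)$. I would first show each $\sigma_j$ is individually an integral of parallel transport, i.e. $\pker(d\sigma_j)=0$. Fix $j$ and a base point $\phi_0$, and transport $\sigma_j(\phi_0)$ along a short path $\gamma$. By the first assertion the transported curve $\tilde\sigma(t)=T_{\gamma|[0,t]}\sigma_j(\phi_0)$ is extremal at each time, hence equals one of the $\sigma_k(\gamma(t))$. As $t\to0$ one has $T_{\gamma|[0,t]}\to\id$, so $\tilde\sigma(t)\to\sigma_j(\phi_0)$; since the extreme points are isolated and vary continuously, the index cannot jump and $\tilde\sigma(t)=\sigma_j(\gamma(t))$ for small $t$. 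Differentiating, $\dot{\tilde\sigma}(0)=(d\sigma_j)(\dot\gamma(0))$, while by construction $\pker\dot{\tilde\sigma}(0)=0$; as $\dot\gamma(0)$ is arbitrary this gives $\pker(d\sigma_j)=0$. Finally, because the equation $\pker(d\sigma)=0$ is linear and the $p_j$ are constants, $\pker\,d\bigl(\sum_j p_j\sigma_j\bigr)=\sum_j p_j\,\pker(d\sigma_j)=0$, so $\sum_j p_j\sigma_j(\phi)$ is an integral of parallel transport.

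The main obstacle is the local rigidity step in the second assertion—pinning down that the transported extremal state retains the same index $j$—which genuinely uses both the isolation of the finitely many extreme points and the continuity of $T_\gamma$ in the path; without isolation the index could drift. A secondary point to state carefully is the reliance on the AFGG convex-structure input for positivity preservation, since trace preservation and linearity alone would not force extreme points to map to extreme points. (Global monodromy may permute the $\sigma_j$ around noncontractible loops, but this does not affect the statement, which only asserts that the given sections are covariantly constant.)
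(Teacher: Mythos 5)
Your proof is correct, but it is substantially more self-contained than the paper's. The paper's entire proof is a deferral: it cites the more general Proposition 3 of \cite{AFGG} and offers only the intuition that parallel transport maps the boundary of the stationary set to the boundary and extremal points to extremal points. You instead derive both assertions from the single input the paper quotes just before the proposition---consistency of parallel transport with the convex structure of stationary states---plus elementary arguments: the propagator $T_\gamma$ of the linear transport equation is trace preserving (your differentiation of $\tr(\pker\rho)=\tr\rho$ is sound, since $\Ran\qran\subseteq\Ran\lin$ consists of traceless operators) and invertible via the reversed path, hence an affine bijection of the convex sets of stationary states, and affine bijections carry extreme points to extreme points; this is exactly the formalization of the paper's one-line intuition. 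Your treatment of the second assertion---local index rigidity (the transported extremal state cannot jump among the finitely many isolated, continuously varying sections $\sigma_j$), whence $\pker(d\sigma_j)=0$ for each $j$, then linearity of the transport equation to handle $\sum_j p_j\sigma_j(\phi)$---appears nowhere in the paper and is precisely the missing step needed to pass from ``extremal goes to extremal'' to ``the given combination is an integral of parallel transport.'' The trade-off: the paper's citation buys brevity and inherits a stronger statement from \cite{AFGG}, while your argument makes visible where each hypothesis (isolation, finiteness, constancy of the $p_j$) enters, at the cost of still relying on \cite{AFGG} for positivity preservation of the transport, which neither you nor the paper proves directly---a dependence you correctly flag.
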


\begin{proof} A more general statement has been proved in \cite[Proposition 3]{AFGG}. The intuition is that states move inside $\ker\lin$ as little as 
possible. In particular the boundary should be mapped by parallel transport to the boundary and extremal points to extremal points.
\end{proof}

Parallel transport is path independent for two important families of Lindbladians:
\begin{itemize}
\item Lindbladians with a unique stationary state, (Section~\ref{sec:unique}). 
\item Dephasing Lindbladians  where the isolated extremal states are  the  one dimensional spectral projections \mbox{$\sigma_j(\phi)=P_j(\phi)$}, (Section~\ref{sec:dephasing}). 
\end{itemize}

Finally we discuss the parallel transport for iso-spectral families.
\begin{prop}
\label{IsoParTran}
The family $\sigma(\phi) = U(\phi)\sigma U^*(\phi)$ is an integral of parallel transport, Eq.~(\ref{parTran}), if and only if 
\begin{equation}\label{consistencyG}
\pker[G,\,\sigma] = 0,
\end{equation}
where $G d \phi = iU^*d U$. In particular the condition applies when $\sigma$ is an isolated extremal point.
\end{prop}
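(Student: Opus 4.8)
The plan is to reduce the parallel-transport condition \eqref{parTran} to the algebraic condition \eqref{consistencyG} by a direct differentiation, exploiting that the whole iso-spectral family is obtained from a fixed Lindbladian by conjugation. First I would introduce the (script) super-operator ${\cal W}(\rho)=U\rho U^*$, which is invertible with ${\cal W}^{-1}(\rho)=U^*\rho U$, and differentiate $\sigma(\phi)={\cal W}(\sigma)$. Writing $iU^*dU=G\,d\phi$, so that $dU=-iUG\,d\phi$ and $dU^*=iG\,U^*\,d\phi$, the product rule collapses to
\[
d\sigma=dU\,\sigma\,U^*+U\,\sigma\,dU^*=-i\,U[G,\sigma]U^*\,d\phi=-i\,{\cal W}\bigl([G,\sigma]\bigr)\,d\phi,
\]
where $\sigma$ and $G$ on the right are the reference (fixed) state and the connection one-form. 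This is the only computation and it is routine.

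The key structural step is that iso-spectrality, Eq.~\eqref{isoLin}, forces $\lin_\phi={\cal W}\lin{\cal W}^{-1}$, so $\lin_\phi$ and $\lin$ have the \emph{same} spectrum and $(z-\lin_\phi)^{-1}={\cal W}(z-\lin)^{-1}{\cal W}^{-1}$. Feeding this into the Riesz formula \eqref{PKer} with the \emph{same} contour — legitimate precisely because the spectrum does not move, which is guaranteed by Assumption~\ref{ass:smoothness} — yields $\pker_\phi={\cal W}\pker{\cal W}^{-1}$. Combining with the formula for $d\sigma$ gives
\[
\pker_\phi(d\sigma)=-i\,{\cal W}\bigl(\pker[G,\sigma]\bigr)\,d\phi,
\]
and since ${\cal W}$ is invertible, $\pker_\phi(d\sigma)=0$ if and only if $\pker([G,\sigma])=0$, which is the asserted equivalence. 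Here I would record the bookkeeping that matters: $\pker$ and $\sigma$ are the reference data while $G=iU^*dU$ is the one-form evaluated at $\phi$, so the condition is that the operator-valued one-form $\pker([G,\sigma])$ vanishes identically on $\cal M$; re-centring the family at any base point (replacing $U$ by $U(\phi_0)^{-1}U$, which leaves $iU^*dU$ unchanged) shows there is no loss in taking the base point as the reference.

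For the final clause, I would argue that conjugation $\rho\mapsto U\rho U^*$ is an affine isomorphism of the convex set of stationary states preserving trace and positivity, hence it maps extremal points to extremal points and preserves their isolation; therefore $\sigma(\phi)=U\sigma U^*$ is an isolated extremal stationary state of $\lin_\phi$ for every $\phi$. By Proposition~\ref{extreme} an isolated extremal state is an integral of parallel transport, so $\pker_\phi(d\sigma)=0$, and the equivalence just proved delivers \eqref{consistencyG}. The main obstacle I anticipate is here, not in the computation: the tempting elementary argument — perturb $\sigma$ by $\pm t\,\pker_\phi(d\sigma)$ inside $\ker\lin_\phi$ and invoke extremality — only works when $\sigma$ is faithful, since for a rank-deficient extremal state (for instance a pure projection) the perturbation need not stay positive. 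Dispensing with that case is exactly what Proposition~\ref{extreme} supplies, so the real content of the clause is the geometric fact that parallel transport cannot move an isolated extremal point within the instantaneous manifold of stationary states.
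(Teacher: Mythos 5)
Your proof is correct and follows essentially the same route as the paper: differentiate the unitary family to get $d\sigma=-i\,U[G,\sigma]U^*\,d\phi$, insert this into the parallel-transport condition Eq.~(\ref{parTran}), and settle the final clause by appeal to Prop.~\ref{extreme}. The only difference is explicitness: the paper simply inserts Eq.~(\ref{ULocalResponse}) into Eq.~(\ref{parTran}), leaving the covariance $\pker_\phi={\cal W}\pker{\cal W}^{-1}$ and the base-point bookkeeping implicit, which you spell out.
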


\begin{proof} Condition (\ref{consistencyG}) follows by inserting 
\begin{equation}
\label{ULocalResponse}
d \sigma =  -i[ G, \sigma] d \phi
\end{equation}
 into Eq.~(\ref{parTran}). The last claim is a consequence of Prop.~\ref{extreme}. 
\end{proof}

\begin{exa}
Condition~(\ref{consistencyG}) holds for any iso-spectral family with a unique stationary state (and $G$ bounded). In this case $\pker$ is given by Eq.~(\ref{UGS})
 and
$$
\pker [G,\,\sigma] = \sigma \tr \left( [G,\,\sigma] \right) = 0
$$
by the cyclicity of the trace.
\end{exa}

%%%%%%%%%%%%%%%%%%%%%

\section{Observables and {fluxes}}
\label{sec:ObsRates}

We denote observables by $X$.  The evolution of observables (in the Heisenberg representation) is generated by $\lin^*$: 
	\begin{equation}\label{Heisenberg}
	\dot X= \partial_tX+\lin^* X, \quad \Bigl(\dot X=\frac {dX}{dt}\Bigr)
	\end{equation}
where
\begin{equation}\label{LStar}
	\lin^*X = i [H,X] + {\cal D}^*X, \quad {\cal D}^*X=\sum_\alpha  \Gamma_\alpha^*[ X ,\Gamma_\alpha]+ [ \Gamma_\alpha^*, X ]\Gamma_\alpha.
	\end{equation}
 $\dot X $ is itself an observable: We refer to $\dot X$ either as the {\em flux} (or rate) of $X$ or simply as the flux $\dot X$. 	For example, the velocity is the flux of the position and the force is the flux of the momentum. 
 \begin{assumption}\label{ass:partial}
 $X$ is not explicitly time dependent, i.e. $\partial_tX=0$, and hence $\dot X= \lin^*X$.
 \end{assumption}
%%%%%%%%%%%%%%%%%%%%%%%%%%%%%%%%%%%%%%%%%%%%%%%%%%%%%%
%{In  the  framework  of operator algebras \cite{BratteliRobinsonBook}  observables are  elements of a $C^*$ algebra (hence bounded). Under Assumption~\ref{bounded} the flux  of an observable is also an observable (and so bounded). }
%%%%%%%%%%%%%%%%%%%%%%%%%%%%%%%%%%%%%%%%%%%%%%%%
Fluxes lie in $\Ran\lin^*$.  
They have the special feature of vanishing expectation in stationary states. In fact:

%%%%%%%%%%%%%
\begin{prop}[No currents in stationary states]\label{NoCurrents}
Let  $\sigma$ be a trace class stationary state and $\dot X$ the flux of  the  bounded observable $X$. Then, 
$\tr (\dot X\sigma)=0$. Conversely, if $\tr (A \sigma)=0$ for any stationary state, then $A=\lin^*X=\dot X$ for some bounded observable $X$.
\end{prop}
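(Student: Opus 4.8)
The plan is to read the statement as the single identity $\Ran\lin^{*}=(\ker\lin)^{\perp}$, where $\perp$ denotes the annihilator with respect to the trace pairing $\langle X,\rho\rangle=\tr(X\rho)$ between the bounded observables ${\cal B}(\hil)$ and the trace-class states ${\cal T}(\hil)$. The ``no currents'' claim then splits into the two inclusions of this identity, the first proved by duality and the second by the closed-range theorem.

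For the forward direction I would first establish the duality $\tr(\lin^{*}(X)\,\sigma)=\tr(X\,\lin(\sigma))$ for bounded $X$ and trace-class $\sigma$. The Hamiltonian part is immediate from cyclicity of the trace; for the dissipative part one writes ${\cal D}^{*}(X)=\sum_{\alpha}2\Gamma_{\alpha}^{*}X\Gamma_{\alpha}-\Gamma_{\alpha}^{*}\Gamma_{\alpha}X-X\Gamma_{\alpha}^{*}\Gamma_{\alpha}$ and checks, term by term and again by cyclicity, that pairing against $\sigma$ reproduces $\tr(X\,{\cal D}(\sigma))$. Boundedness of $H$, $\Gamma_{\alpha}$ and $X$ ensures every product is trace class, so the traces are well defined and the rearrangements legitimate. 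With the duality in hand, $\tr(\dot X\,\sigma)=\tr(\lin^{*}(X)\,\sigma)=\tr(X\,\lin(\sigma))=0$ because $\lin(\sigma)=0$.

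For the converse I would read the hypothesis $\tr(A\sigma)=0$ for every trace-class stationary $\sigma$ as the assertion $A\in(\ker\lin)^{\perp}$, noting that this annihilator lives in ${\cal T}(\hil)^{*}={\cal B}(\hil)$, which is exactly where bounded observables sit. It remains to show $(\ker\lin)^{\perp}\subseteq\Ran\lin^{*}$. Here the gap condition, Assumption~\ref{ass:gap}, is decisive: it forces $\Ran\lin$ to be closed (as recorded in the remark following that assumption), whereupon the closed-range theorem for the bounded operator $\lin$ on ${\cal T}(\hil)$ yields $\Ran\lin^{*}=(\ker\lin)^{\perp}$. Equivalently, one may argue concretely with the Riesz projections $\pker,\qran$: since $\Ran\qran^{*}=\ker\pker^{*}=(\Ran\pker)^{\perp}=(\ker\lin)^{\perp}$ the hypothesis reads $A=\qran^{*}A$, and, by the gap condition, $\lin$ is invertible on the spectral subspace $\Ran\qran$ so that dually $\lin^{*}$ is invertible on $\Ran\qran^{*}$; hence $X:=(\lin^{*}|_{\Ran\qran^{*}})^{-1}A$ is a bounded observable with $\lin^{*}(X)=A=\dot X$.

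I expect the main obstacle to be bookkeeping in the converse rather than any deep difficulty: one must respect the \emph{asymmetric} duality between ${\cal T}(\hil)$ and ${\cal B}(\hil)$ (the former is a predual of the latter, not conversely) and verify that the closed-range theorem applies in this Banach-space setting, with $\lin$ in the role of $T$ and $\lin^{*}$ in the role of $T^{*}$. The genuinely load-bearing hypothesis is the gap condition: it is what promotes the always-valid inclusion $\Ran\lin^{*}\subseteq(\ker\lin)^{\perp}$ to an equality by guaranteeing closed range; without it only the dense version $\overline{\Ran\lin^{*}}=(\ker\lin)^{\perp}$ would survive.
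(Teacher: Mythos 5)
Your proposal is correct and takes essentially the same route as the paper: the forward direction is the identical duality computation $\tr(\lin^*(X)\sigma)=\tr(X\lin(\sigma))=0$, and your ``concrete'' converse --- deducing $\pker^*(A)=0$ and then inverting $\lin^*$ on $\Ran\qran^*$ --- is exactly the paper's argument, with the final inversion step (which the paper leaves implicit) spelled out. The closed-range-theorem framing you lead with is an abstract repackaging of that same step, not a genuinely different proof.
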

\begin{proof}
The (super) operator $\lin^*$ acts on the space of bounded operators, this being the dual of the space of trace class operators. By Eq.~(\ref{Heisenberg}) and Eq.~(\ref{eq:duality}):
	\begin{align}\label{X0atEq}
	\tr (\dot X\sigma)= \tr((\lin^*X)\sigma)=\tr(X(\lin\sigma))=0\,.
\comment{	=\tr\Bigl((\lin^*Q^*X)\sigma\Bigr) \\
	&=\tr\bigl(\qran^*X(\lin\sigma)\bigr)=0\,.} %=\tr\big(X\dot \rho)=0
	\end{align}
Conversely when (a bounded)
 $A$ has vanishing expectation in stationary states, then
	\be
	0=\tr (A(\pker\rho))=\tr \bigl((\pker^*A)\rho\bigr)
 	\ee
holds for any $\rho$. Hence $\pker^*A=0$ and $A$ lies in the range of $\lin^*$. 
\end{proof}

An example of an observable which is not a flux is:

\begin{exa}[Loop currents]
Consider a quantum particle on  a ring  with $p$ sites, $\mathbb{Z}_p$, evolving by  the (bounded) Hamiltonian
	\[
	H(\phi )=e^{i\phi}T+e^{-i\phi} T^*,\quad (T\psi)(n)=\psi(n-1).
	\]
$p\phi$ may be interpreted as the magnetic flux threading the ring. The stationary states are $\bra{n}\sigma_j\ket{m}= p^{-1} e^{i(n-m) 2\pi j/p}$, ($j=0,\ldots p-1$). The angular velocity is the (bounded) operator
	\[
	-\partial_\phi H=-i(e^{i\phi}T-e^{-i\phi} T^*),
	\]	
Since $-\tr (\partial_\phi H \sigma_j)= 2\sin(2\pi j/p-\phi)$ does not vanish in stationary states, $\partial_\phi H$ is not a flux: The angle is not an observable, since it is multivalued.
\end{exa}	

%%%%%%
Current carrying stationary states can  occur only  when either the system is multiply connected or in the thermodynamic limit. Examples are  supercurrents, where magnetic vortices effectively make the system multiply connected,  and  stationary currents in mesoscopic rings \cite{bloch, bohm}.
%
%\marginpar{\small added "usually", because on a bounded interval the 
%position operator is both well defined and bounded.}
%
The velocity is the flux of the position operator, which is usually unbounded. It is therefore interesting to examine conditions that would allow extending Prop.~\ref{NoCurrents} to unbounded operators.  Indeed, the gap condition implies that the expectation values of fluxes vanish in stationary states even for unbounded $X$ provided $Q^*X$ is bounded. Indeed, the gap condition allows us to use Eq.~(\ref{QL}) and replace Eq.~(\ref{X0atEq}) by
\be\label{X0atEqUb}
\tr((\lin^*X)\sigma)=\tr((\lin^*Q^*X)\sigma))=\tr\bigl((\qran^*X)(\lin\sigma)\bigr)=0\,.
\ee
A more careful discussion of this point is given in Prop.~\ref{NoCurrentsUnb} of Appendix \ref{unbounded}. 

An example where $X$ is unbounded but $Q^*X$ is bounded is:

\begin{exa}[Taming $X$]\label{taming}
Consider a quantum particle with spin hopping on the integer lattice. The Hilbert space is $\ell(\mathbb{Z})\otimes \mathbb{C}^2 $ and let the Hamiltonian be
\[
	H=T\otimes a+T^*\otimes a^*,
	\]
where $T$ is the unit left shift and $a$ and $a^*$ are the spin lowering and raising operators $a^2=(a^*)^2=0, \ \{a,a^*\}=1 $. Since $H^2=\id$ we can write $H$ as a difference of two  (infinite dimensional) projections: 
	\[
	%H= \oint d\theta H(\theta), \quad 
	H =P_+-P_-
	,\quad 2 P_{\pm}= \id\pm H.
	\]
The position operator is $X\otimes \id$. It is clearly unbounded. Since $[T,X]=T$,  the velocity is the bounded operator
	\[
	\dot X= i[H,X\otimes \id ]=i(T\otimes a- T^*\otimes a^*).
	\] 
In fact, $\dot X^2=\id$.  
The appropriate version of Eq.~(\ref{PD}) says that $Q^*$ is given by
	\begin{align*}
	\qran^* X&=P_+X P_-+P_-X P_+\\
	&=P_+[X, P_-]+P_-[X, P_+]\\
	&=\frac 1 2 \left(-P_+[X, H]+P_-[X, H]\right)\\
	&=-\frac i 2\left( P_+-P_-\right)\dot X,
	\end{align*} 
being a product of bounded operators,  it is bounded, even though $X$ is not.
\end{exa}

%%%%%%%%%%%%%%%%%%%%%%%%%%%%%%%%%%%%%%

%%%%%%%%%%%%%%%%%%%%%%%%%%%%%%%%%%%%%%%%%%%%%
	
\subsection{Virtual work for Lindbladians}\label{sec:virtual}

%%%%%%%%%%%%

%%%%%%%%%%%%%

%\margin{shall we worry about dimensions?}

The principle of virtual work associates observables with variations of a controlled Hamiltonian $H(\phi)$. Our aim here is to formulate a corresponding principle for Lindbladians. 

%%%%%%%%%%
%%%%%%%%%%%%%
 Observe that, first $\lin$ is a (super) operator, so its variation {\em does not} define an observable and second, the notion of ``energy'' is ambiguous in Lindblad evolutions.
The principle of virtual work we formulate is gauge invariant in the sense of  Section~\ref{sec:ambiguity}. 
\begin{thm}\label{virtualVariation}
The observables $X_\mu$ given by 

 %%%%%%%%%%%%%%%%
\begin{equation}
\label{martin}
X_\mu \delta\phi^\mu = \delta H +
i \sum_\alpha \left(\Gamma^*_\alpha \delta \Gamma_\alpha - \delta \Gamma^*_\alpha \Gamma_\alpha \right),
\end{equation}
 involving the joint variation of $H$ and $\Gamma$,
 are (formally) self-adjoint and free from the ambiguity in $H$ and $\Gamma$, 
under $\phi$ independent gauge transformations. %%%%%%%%%
\end{thm}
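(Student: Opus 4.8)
The plan is to treat the two assertions of the theorem separately: formal self-adjointness, which is a direct algebraic computation, and gauge invariance, which I would check against each of the two generators of the gauge freedom, the scalar shift~(\ref{gauge}) and the unitary mixing~(\ref{gauge2}). To avoid a clash of notation I would reserve $\delta$ for the \emph{control} variation appearing in~(\ref{martin}), so that $X_\mu=\partial_\mu H+i\sum_\alpha(\Gamma_\alpha^*\partial_\mu\Gamma_\alpha-\partial_\mu\Gamma_\alpha^*\Gamma_\alpha)$, and describe the gauge transformations through their finite forms. The key structural remark to make at the outset is that the $\Gamma$-dependent sum in~(\ref{martin}) is engineered precisely to compensate the gauge-ambiguity of $\delta H$.

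For self-adjointness, write $Y=\delta H+i\sum_\alpha(\Gamma_\alpha^*\delta\Gamma_\alpha-\delta\Gamma_\alpha^*\Gamma_\alpha)$. Since $H(\phi)$ is self-adjoint for every $\phi$, its variation $\delta H$ is self-adjoint. Taking the adjoint of the remaining sum $A=\sum_\alpha(\Gamma_\alpha^*\delta\Gamma_\alpha-\delta\Gamma_\alpha^*\Gamma_\alpha)$ and using $(\delta\Gamma_\alpha)^*=\delta\Gamma_\alpha^*$ gives $A^*=-A$, so $(iA)^*=-iA^*=iA$; hence $Y^*=Y$. This step is routine and I expect no difficulty.

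For gauge invariance I would take all gauge parameters independent of $\phi$ and handle the two generators in turn. Under~(\ref{gauge}), $\Gamma_\alpha\mapsto\Gamma_\alpha+g_\alpha\id$ and $H\mapsto H+e\id-i\sum_\alpha(g_\alpha^*\Gamma_\alpha-g_\alpha\Gamma_\alpha^*)$. Differentiating in $\phi$, the constants $e,g_\alpha$ drop out of $\delta\Gamma_\alpha$ and produce in $\delta H$ the extra term $-i\sum_\alpha(g_\alpha^*\delta\Gamma_\alpha-g_\alpha\delta\Gamma_\alpha^*)$; expanding the shifted $\Gamma$'s in the compensating sum produces exactly $+i\sum_\alpha(g_\alpha^*\delta\Gamma_\alpha-g_\alpha\delta\Gamma_\alpha^*)$, and the two cancel, leaving $Y$ unchanged. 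This cancellation is the whole point of the particular combination in~(\ref{martin}). Under~(\ref{gauge2}), $H$ is untouched and $\Gamma_\alpha\mapsto\sum_\beta\mathcal{U}_{\alpha\beta}\Gamma_\beta$ with $\mathcal{U}$ constant; since $\delta$ passes through the constant $\mathcal{U}$, the identity $\mathcal{U}^*\mathcal{U}=\id$ collapses $\sum_\alpha(\Gamma_\alpha^*\delta\Gamma_\alpha-\delta\Gamma_\alpha^*\Gamma_\alpha)$ to its original value, in exactly the way the dissipator $\cal D$ is left invariant.

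The main point to highlight, rather than a genuine obstacle, is the necessity of the $\phi$-independence hypothesis: if $g_\alpha$ or $\mathcal{U}$ depended on $\phi$, then $\delta g_\alpha$ and $\delta\mathcal{U}$ would generate residual terms, for instance $-i\sum_\alpha(\delta g_\alpha^*\,\Gamma_\alpha-\delta g_\alpha\,\Gamma_\alpha^*)$ from the $H$-shift, which do not cancel, and $Y$ would acquire a true gauge dependence. So the content of the theorem is precisely that, after fixing a $\phi$-independent gauge, the assignment $\phi^\mu\mapsto X_\mu$ is well defined. I would close by noting that a general gauge transformation is a composition of~(\ref{gauge}) and~(\ref{gauge2}), so invariance under each generator yields invariance in general.
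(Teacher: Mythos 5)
Your proposal is correct and follows essentially the same route as the paper: a direct check that, for $\phi$-independent gauge parameters, the shift in $\delta H$ induced by Eq.~(\ref{gauge}) is exactly cancelled by the shift in the compensating sum $i\sum_\alpha(\Gamma_\alpha^*\delta\Gamma_\alpha-\delta\Gamma_\alpha^*\Gamma_\alpha)$, with the unitary mixing~(\ref{gauge2}) handled by $\mathcal{U}^*\mathcal{U}=\id$. You merely spell out the self-adjointness computation and the unitarity step that the paper states without detail, which is fine but not a different argument.
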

\begin{proof}
For $g_\alpha$ and $e$ independent of $\phi$,  the gauge transformation, Eq.~(\ref{gauge}), affects the variation by 
$$
\left(\delta H,\,\delta \Gamma_\alpha \right) \mapsto \Bigl(\delta H - i \sum_{\alpha} (g^*_\alpha \delta \Gamma_\alpha - g_\alpha \delta\Gamma^*_\alpha),\,\delta \Gamma_\alpha \Bigr).
$$
This leaves $X_\mu$ invariant. The same applies to the transformation ${\cal U}$ of Eq.~(\ref{gauge2}).
\end{proof}
%{\begin{rem}
%Note that if the control parameters $\phi^\mu$ have different physical dimensions, then so do %the observables $X_\mu$. \end{rem} }
\begin{rem}[A second gauge invariant family]
A second family of observables that are gauge invariant is 
$ \delta\left(\sum_\alpha \left[ \Gamma_\alpha,\Gamma^*_\alpha \right]\right)$. 
This follows from the commutativity  $[\Gamma_\alpha,g_\beta \id]=0$, and the unitarity of ${\cal U}$.
\end{rem}

The observables $X_\mu$ extend the notion of the principle of 
virtual work to the Lindbladian setting. The physical interpretation of $ X_\mu$ is often suggested by dimensional analysis. It depends on the choice of controls and  is model dependent. 
%%%%%%%%%%
%%%%%%%%%%%

\begin{exa}[Controlled oscillator in thermal contact: Example~\ref{exa:decay} continued]\label{exa:oscillator} Shifting the anchoring point of the oscillator gives
	\begin{align*}
	\delta H= -\frac {a+a^*}{\sqrt 2}{\delta\phi}= -x\delta\phi,\quad   i  \bigl(\Gamma^*_\pm \delta \Gamma_\pm - \delta \Gamma^*_\pm \Gamma_\pm \bigr)=\mp\gamma_\pm\frac{a^*-a}{i\sqrt 2}{\delta\phi}=\pm\gamma_\pm p\delta\phi.
	\end{align*}
$-x$ is  the spring force, while $-\gamma_-p$ gives the friction force due to the cold contact and $\gamma_+p$ the gain from the hot contact. 
The observable distinguished by virtual work is the total force i.e. the momentum flux 
	\be\label{totForce}
	\dot p=-x-(\gamma_--\gamma_+)p,\quad (\gamma_->\gamma_+).
	\ee
\end{exa}\label{force}
In the example, the principle of virtual work gives a flux. This is not a coincidence.
For iso-spectral Lindbladians, Eq.~(\ref{isoLin}), virtual work is a flux.  More precisely, 
let $G_\mu$ denote the (local) infinitesimal generators
	\be
	\label{InfGenerator}
	G \delta \phi = G_\mu \delta\phi^\mu =i {U^* \delta U}
	\ee
%and $G_\mu$ are the infinitesimal generators of a Lie group
(summation implied). The variations are:
	\be\label{QG}
	\delta H= i[H,G_\mu]\, \delta\phi^\mu, \quad \delta \Gamma_\alpha= i[\Gamma_\alpha,G_\mu]\, \delta\phi^\mu.
	\ee
\begin{thm}[Virtual work and fluxes] \label{thm:VW}
For iso-spectral families of Lindbladians generated by $G_\mu$, the observables associated with the principle of virtual work,
Eq.~(\ref{martin}), are the fluxes of the generators $G_\mu$ :
	\be\label{Q}
	\lin^*G_\mu\delta\phi^\mu = \delta H +
i \sum_\alpha \left(\Gamma^*_\alpha \delta \Gamma_\alpha - \delta \Gamma^*_\alpha \Gamma_\alpha \right).
	\ee
In particular we have Noether's theorem in the form: If $\delta U$ is a symmetry, in the sense that the r.h.s. vanishes, then its generator is a conserved quantity.
\end{thm}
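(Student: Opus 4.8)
The plan is to prove \eqref{Q} by a direct algebraic verification: insert the iso-spectral variations $\delta H=i[H,G]$ and $\delta\Gamma_\alpha=i[\Gamma_\alpha,G]$, with $G=G_\mu\delta\phi^\mu$ the self-adjoint generator of \eqref{InfGenerator}, into the right-hand side of the virtual-work formula \eqref{martin}, and then recognize the outcome as $\lin^*(G)$ read off from \eqref{LStar}. No analysis is needed here; the whole content is the matching of two bookkeeping expressions. I would read the stated commutator form of the variations as holding at the base point of the local trivialization, where $U=\id$, so that $H$, $\Gamma_\alpha$ and $G_\mu$ are all taken at the same point of the fibre; covariance (see below) then extends the identity to arbitrary $\phi$.

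The key step is to compute the adjoint variation and assemble the bath terms. Since $G^*=G$ (the generator $iU^*\delta U$ of \eqref{InfGenerator} is self-adjoint because $U^*\delta U$ is anti-self-adjoint), one has $\delta\Gamma_\alpha^*=(\,i[\Gamma_\alpha,G]\,)^*=i[\Gamma_\alpha^*,G]$. Inserting this into the dissipative part of \eqref{martin} gives
\begin{align*}
i\sum_\alpha\bigl(\Gamma_\alpha^*\,\delta\Gamma_\alpha-\delta\Gamma_\alpha^*\,\Gamma_\alpha\bigr)
&=i\sum_\alpha\bigl(\Gamma_\alpha^*\,i[\Gamma_\alpha,G]-i[\Gamma_\alpha^*,G]\,\Gamma_\alpha\bigr)\\
&=\sum_\alpha\bigl(\Gamma_\alpha^*[G,\Gamma_\alpha]+[\Gamma_\alpha^*,G]\,\Gamma_\alpha\bigr)={\cal D}^*(G),
\end{align*}
which is precisely the dissipative part of $\lin^*$ in \eqref{LStar}. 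Adding the Hamiltonian contribution $\delta H=i[H,G]$ reproduces $\lin^*(G)=i[H,G]+{\cal D}^*(G)$, i.e.\ the whole of \eqref{Q} once the form index $\delta\phi^\mu$ is restored.

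The Noether statement is then immediate: if $\delta U$ is a symmetry, the right-hand side of \eqref{Q} vanishes for every $\delta\phi$, so $\lin^*(G_\mu)=0$ for each $\mu$. By Assumption~\ref{ass:partial} this reads $\dot G_\mu=0$, i.e.\ each generator $G_\mu$ is a conserved observable. (It is also consistent with Prop.~\ref{NoCurrents}: a vanishing flux trivially has vanishing expectation in every stationary state.)

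I do not expect a genuine obstacle; the difficulties are purely of bookkeeping. The two points to get right are tracking the factors of $i$ and the commutator signs so that the two ``cross'' terms assemble into ${\cal D}^*(G)$ rather than its negative, and justifying $\delta\Gamma_\alpha^*=i[\Gamma_\alpha^*,G]$, which rests entirely on the self-adjointness $G^*=G$. The one conceptual subtlety is the covariance remark above: away from $U=\id$ the naive variation of $H(\phi)=U H U^*$ carries conjugating factors $U(\cdot)U^*$, but these dress $H$, $\Gamma_\alpha$ and $G_\mu$ simultaneously, so the commutator identity — and hence \eqref{Q} — is unchanged.
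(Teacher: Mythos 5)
Your proof is correct and coincides with what the paper intends: the paper's entire proof reads ``Transparent,'' i.e.\ it leaves to the reader exactly the computation you carry out (insert $\delta H=i[H,G]$ and $\delta\Gamma_\alpha=i[\Gamma_\alpha,G]$ into Eq.~(\ref{martin}) and match the result against $\lin^*(G)=i[H,G]+{\cal D}^*(G)$ from Eq.~(\ref{LStar})). Your sign bookkeeping, the use of $G^*=G$ to obtain $\delta\Gamma_\alpha^*=i[\Gamma_\alpha^*,G]$, the covariance remark handling general $\phi$, and the Noether conclusion via Assumption~\ref{ass:partial} are all in order.
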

\begin{proof} By Eqs.~(\ref{martin}) and (\ref{QG}). \end{proof}
%%%%%%%%%%%%%%%%%%%%%%%%

%%%%%%%%%%%%%%%%%%%%%%%%%%%%%%%%%
\subsection{Currents}\label{sec:currents}
Just as there are three notions of force in a damped oscillator, there  are several notions of currents in an open system.
%%%%%%%%
%%%%%%%%%%
By partitioning the  the system into a subsystem $\Omega$ and its complement $\Omega^c$, one identifies three notions of currents:
\begin{itemize}
\item   The rate of charge  in the subsystem $\Omega$, 
	\be\label{Irate}
	\dot Q_\Omega=\lin^*Q_\Omega%= i [H,Q_\Omega] + {\cal D}^*Q_\Omega
%\sum_\alpha\bigl(  \Gamma_\alpha^*[ Q_\Omega ,\Gamma_\alpha]+ [ \Gamma_\alpha^*, Q_\Omega ]\Gamma_\alpha\bigr).
	\ee

\item For  charge conserving $H$, i.e. $[H, Q_{\Omega\cup\Omega^c}]=0$,  the current $I_{\partial\Omega}$ flowing  from $\Omega$ to its complement is
\be\label{eq:I}
	I_{\partial\Omega}= i[H,Q_\Omega].
	\ee
With $H$ local, this current is naturally associated with the boundary.
\item The current, $S_\Omega$, flowing from the subsystem $\Omega$ to the bath defined via charge conservation
%%%%%%%%%%
%%%%%%%%%%%
	\be\label{TotQ}
	\dot Q_\Omega=I_{\partial \Omega}+ S_\Omega.
	\ee
\end{itemize}
The three currents are measured by different instruments: $\dot Q_\Omega$ is measured by an electrometer while $I_{\partial\Omega}$ is measured by an ammeter that monitors the flow at the boundary between the subsystems. In view of 
	\be\label{Irate}
	\lin^*Q_\Omega= i [H,Q_\Omega] + {\cal D}^*Q_\Omega,
	\ee
$S_\Omega$ has been called dissipative current in \cite{Bel, Car}.

The partitioning of $\dot Q_\Omega$ into $I_{\partial \Omega}$ and $S_\Omega$ is, of course, gauge dependent. Models often offer a natural choice of $H$.  
%%%%%%%%%%{}
\begin{figure}[htbp]
\begin{center}
\includegraphics[width=0.5\textwidth]{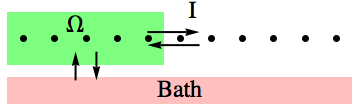}
\caption{The dots represent the system and the (green) box the subsytem $\Omega$. The bath is represented by the (red) box on the bottom. The horizontal arrow represent the current $I$  across the boundary of $\Omega$. The vertical arrows represent $S_\Omega$. }
\label{fig:box}
\end{center}
\end{figure}
%\addtocounter{exa}{3}
\begin{exa}
Consider Fermions hopping on a one dimensional lattice which can also tunnel in and out of a bath.  The Lindbladian has
	\[
 H=\sum_j  \left(a^*_{j+1} a_j+a^*_ja_{j+1}-\mu a^*_{j} a_j\right), \quad \Gamma_j^-= \sqrt{\gamma_-} a_j, \quad \Gamma_j^+= \sqrt{\gamma_+} a^*_j
	\]
with $a_j$ the usual Fermion annihilation operators for site $j$. The charge in the left semi-infinite box is
	\[
	Q_L=\sum_{j\le 0} a^*_j a_j
	\]
and the currents in Eq.~(\ref{TotQ}) are 
\[
	I_{\partial L}=i(a^*_1a_0-a^*_0a_1), \quad 
S_L=2\sum_{j\le 0}(\gamma_+a_j a^*_j-\gamma_-a^*_ja_j):
\]
$I_{\partial L}$ is localized at the boundary of the box, whereas $S_L$ is not.
\end{exa}

In appendix \ref{sec:BCurrents} we elaborate on the notion of currents in magentic fields and in particular, describe the current densities in a model for the open quantum Hall system.  We find three current densities:  A Hamiltonian current density, a diffusion current and  a dissipative chiral current.
%%%%%%%%%%%%%%%%%%%%%%%%%%%%%%%%%%%%%%%%

%%%%%%%%%%%%%%%%%%%%%%%%

%%%%%%%%%%%%%%%%%%

%%%%%%%%%%%%%%%%%%%
\section{Adiabatic Response }
\label{sec:adiabatic}

%\subsection{Adiabatic evolutions}

We are interested in adiabatically changing controls; $\phi=\phi(s)$ where  $s=\e t$ is the slow time. The  evolution equation for the state $\rho$ is
	\be\label{aeom}
	\e  \frac{d\rho}{d s} =\lin_\phi \rho.
	\ee
with initial state that is an instantaneous stationary state\footnote{The general case, where the initial state is not a stationary state, is different and more complicated, because  Eq.~(\ref{geometricgap}) has no analog. It is treated in e.g. \cite{Lidar}. }. 
%%%%%%%
%%%%%%%%%
The adiabatic time scale $\e^{-1}$ is 
the largest time scale in the problem and the adiabatic limit is governed by the stationary states of $\lin$. 
 Another natural,  limit, studied in \cite{DaviesSpohn, Pekolo}, lets the dissipation time scale increase with the adiabatic time scale. This limit is governed by the stationary states of $H$ rather than those of $\lin$.  

A key feature of adiabatic theory is that the evolution of $\rho$ is slaved to the evolution of $\sigma$. We borrow  from \cite{AFGG}:

\begin{prop}[Adiabatic evolution]
%Under Assumption~\ref{ass:smoothness} 
The solution of Eq.~(\ref{aeom}) with initial condition the stationary state $\sigma(0)$ is
%\marginpar{In the 1-dimensional case $\pker \rho=\sigma$ by (\ref{UGS}).}
 	\be\label{AdiabaticXpansion}
	(\pker \rho)(s)= \sigma(s)+\left\{\begin{array}{ll}  0&  \mbox{if $\dim\pker =1 $}\\ %O(\e^\infty)
								O(\e)& \mbox{if $\dim\pker\ge 2$};
	\end{array}\right . 
	\ee
and 
	\be\label{AdiabaticXpansionQ}
 (\qran\rho)(s)=\e \lin^{-1}\dot\sigma(s) +O(\e^2),
	\ee
where $\sigma(s)$ is the corresponding integral of parallel transport.
\end{prop}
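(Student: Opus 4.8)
The plan is to compare the true solution $\rho(s)$ with the explicit two–term ansatz
$\rho_a = \sigma + \e\,\lin^{-1}\dot\sigma$ suggested by the statement, and to control the
difference by a Duhamel argument that exploits separately the contractivity of the Lindblad
propagator and the spectral gap. First I would record the algebraic identities that make
$\rho_a$ natural. Parallel transport, Eq.~(\ref{parTran}), gives $\pker\dot\sigma=0$, so
$\dot\sigma\in\Ran\qran=\Ran\lin$; hence the reduced inverse $\lin^{-1}$ on $\Ran\lin$ (well
defined and bounded thanks to the gap condition, Assumption~\ref{ass:gap}) may be applied,
with $\lin\,\lin^{-1}\dot\sigma=\dot\sigma$. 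Observe that $\pker\rho_a=\sigma$ and
$\qran\rho_a=\e\,\lin^{-1}\dot\sigma$, so $\rho_a$ already displays the structure asserted in
Eqs.~(\ref{AdiabaticXpansion})--(\ref{AdiabaticXpansionQ}); the vanishing of $\pker\dot\sigma$
is precisely the first–order solvability condition that will make the leading defect disappear.

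Next I would compute the defect $R:=\e\dot\rho_a-\lin\rho_a$. Using $\lin\sigma=0$ and
$\lin\,\lin^{-1}\dot\sigma=\dot\sigma$, the $O(\e)$ terms cancel and one is left with
$R=\e^2\frac{d}{ds}\!\left(\lin^{-1}\dot\sigma\right)$, which is $O(\e^2)$ in trace norm because
Assumption~\ref{ass:smoothness} makes $\sigma$, $\pker$ and the reduced resolvent smooth with
bounded derivatives along the (compact) control path. The error $\delta:=\rho-\rho_a$ then
solves $\e\dot\delta=\lin\delta-R$ with initial slip $\delta(0)=-\e\,\lin^{-1}\dot\sigma(0)=O(\e)$.

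The heart of the argument is the Duhamel representation
$\delta(s)=\mathcal{U}_\e(s,0)\delta(0)-\e^{-1}\int_0^s \mathcal{U}_\e(s,u)R(u)\,du$, where
$\mathcal{U}_\e$ is the contractive propagator of $\e\dot\rho=\lin\rho$. A crude bound using only
$\|\mathcal{U}_\e\|\le 1$, $\|\delta(0)\|=O(\e)$ and $\|R\|=O(\e^2)$ yields $\|\delta\|=O(\e)$;
since $\pker\rho_a=\sigma$ this already gives $\pker\rho=\sigma+O(\e)$, the stated estimate when
$\dim\pker\ge 2$. To sharpen the $\qran$–component to $O(\e^2)$ I would feed this crude bound
back: the part of the driving relevant to the gapped $\qran$–sector is $\qran$–valued and
$O(\e^2)$, and the dangerous prefactor $\e^{-1}$ is removed by the Kato--Nenciu
integration–by–parts trick — writing that source as $\lin$ applied to its reduced inverse and
integrating by parts in $u$ via $\e\,\partial_u\mathcal{U}_\e(s,u)=-\mathcal{U}_\e(s,u)\lin(u)$.
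Boundedness of $\lin^{-1}$ on $\Ran\qran$ then converts $\e^{-1}\!\int O(\e^2)$ into $O(\e^2)$,
giving $\qran\rho=\e\,\lin^{-1}\dot\sigma+O(\e^2)$. I expect this upgrade to be the main
obstacle: it needs the reduced resolvent uniformly bounded along the path, careful bookkeeping
of the boundary terms produced by the moving projections $\pker_\phi$, and it is valid only
away from an initial boundary layer of width $O(\e)$ in which $\qran\rho$ relaxes on the fast
time scale from $0$ to its quasi–static value $\e\,\lin^{-1}\dot\sigma$.

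Finally, the case $\dim\pker=1$ would be treated directly rather than through the error bound:
here $\pker$ is given by Eq.~(\ref{UGS}), so $\pker\rho(s)=\sigma(s)\,\tr\rho(s)$, and since the
Lindblad flow is trace preserving, $\tr\rho(s)=\tr\sigma(0)=1$ for all $s$. Hence
$\pker\rho=\sigma$ identically, i.e.\ the error is $O(\e^\infty)$ — in fact exactly zero — while
the $\qran$–estimate obtained above is unchanged, completing the proof.
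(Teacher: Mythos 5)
The paper never proves this proposition: it is imported wholesale from the reference (``We borrow from \cite{AFGG}''), so your argument is by construction a different route --- a self-contained Duhamel/Kato--Nenciu proof rather than a citation. Your strategy is the standard machinery behind such adiabatic theorems and is sound: the two-term ansatz $\rho_a=\sigma+\e\,\lin^{-1}\dot\sigma$, the defect computation $R=\e^2\tfrac{d}{ds}(\lin^{-1}\dot\sigma)$, the crude contraction bound $\|\rho-\rho_a\|=O(\e)$ (which already yields the $\dim\pker\ge 2$ statement), and the integration-by-parts upgrade in the gapped sector. Your treatment of $\dim\pker=1$ is a nice elementary improvement on the general theory: trace preservation plus Eq.~(\ref{UGS}) gives $\pker\rho(s)=\sigma(s)$ \emph{exactly}, stronger than $O(\e^\infty)$. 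Two points must still be written out for completeness. First, the defect is not purely $\qran$-valued: since $\pker(\lin^{-1}\dot\sigma)=0$ identically, one has $\pker R=-\e^2\,\dot\pker\,(\lin^{-1}\dot\sigma)\neq 0$, and after the $\e^{-1}$ Duhamel integration this kernel component is only $O(\e)$; to see that it does not contaminate the $\qran$-estimate you need the intertwining property $\mathcal{U}_\e(s,u)\pker(u)=\pker(s)\,\mathcal{U}_\e(s,u)+O(\e)$, itself proved by the same integration-by-parts trick --- this is what your ``careful bookkeeping of the moving projections'' has to cash out. Second, your boundary-layer caveat is genuine and worth stressing: with initial datum exactly $\sigma(0)$ one has $\qran\rho(0)=0$, so Eq.~(\ref{AdiabaticXpansionQ}) cannot hold at $s=0$ unless $\dot\sigma(0)=0$; the estimate carries an extra transient of size $O(\e)e^{-gs/\e}$ (or one assumes the driving starts at rest), a point the paper's statement glosses over and your proof correctly surfaces.
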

 $\lin^{-1}(\dot\sigma)$ is well defined and bounded since  $\dot\sigma\in \ran\lin$. This follows from parallel transport  $\dot\sigma=\qran\dot\sigma$  and the definition of   $\qran$ as the projection on $\ran\lin$.

%%%%%%%%%%%%%%%%%%%%%%%%%%%%%%%%%%%%%%%%%%%%%%%%%%%%%%

\subsection{The response of unique stationary states}

We are interested in the response of the observable $X$ of an adiabatically driven system.  The case of   a unique stationary state is  simpler than the general case and we treat it first.

\begin{prop}[Response coefficients]\label{RC}
Suppose  %Assumption \ref{ass:smoothness} holds and 
that the stationary state is unique.   Let $X$ be a bounded observable and $\rho$ a solution of the adiabatic Lindblad evolution, Eq.~(\ref{aeom}), with initial state a normalized stationary state $\sigma(0)$.  Then,  the response  at slow time $s$ is memory-less and is given by  
	\[
	 \tr\bigl(X\rho(s)\bigr)= \ \tr (X\sigma(\phi)) +\e  F_{\nu}(\phi)\,\dot\phi^\nu+O(\e^2) , \quad 
	\]
(summation implied)  with $\sigma(\phi)$ the instantaneous stationary state and $\phi=\phi(s)\in\cal M$. The {\em response coefficients} % $F_\nu$ are given by
\begin{equation}
\label{eq:816}
 F_{ \nu}(\phi) = \tr \bigl(X (\lin^{-1}_\phi\pd_\nu \sigma(\phi))\bigr)
\end{equation}
are {\em functions} on control space ${\cal M} $.
\end{prop}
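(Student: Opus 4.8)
The plan is to feed the two-term adiabatic expansion supplied by the Adiabatic evolution proposition directly into the expectation $\tr(X\rho(s))$ and then expose the control dependence through the chain rule. The genuine analytic work is carried entirely by that borrowed expansion, so the argument is short.

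First I would split $\rho=\pker\rho+\qran\rho$ using $\pker+\qran=\id$, giving $\tr(X\rho)=\tr(X\pker\rho)+\tr(X\qran\rho)$. Since the stationary state is unique we are in the case $\dim\pker=1$, so Eq.~(\ref{AdiabaticXpansion}) reads $\pker\rho(s)=\sigma(s)+O(\e^\infty)$ and Eq.~(\ref{AdiabaticXpansionQ}) reads $\qran\rho(s)=\e\,\lin^{-1}\dot\sigma(s)+O(\e^2)$. As $X$ is bounded and the remainders are controlled in trace norm, pairing with $X$ yields
\be
\tr(X\rho(s))=\tr(X\sigma(s))+\e\,\tr\bigl(X\lin^{-1}\dot\sigma(s)\bigr)+O(\e^2).
\ee

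It then remains to rewrite the slow-time derivative. Here $\sigma(s)=\sigma(\phi(s))$, and because for a unique stationary state $\sigma(\phi)$ is a genuine single-valued function of the control that automatically solves parallel transport (it is the range of $\pker$ in Eq.~(\ref{UGS}), and $\pker\,d\sigma=\sigma\,\tr(d\sigma)=0$ since $\tr\sigma=1$), the chain rule gives $\dot\sigma=(\partial_\nu\sigma)\dot\phi^\nu$. Linearity of $\lin^{-1}_\phi$ and the fact that the $\dot\phi^\nu$ are scalars then turn the $O(\e)$ term into $\e\,F_\nu(\phi)\dot\phi^\nu$ with $F_\nu(\phi)=\tr\bigl(X\lin^{-1}_\phi(\partial_\nu\sigma(\phi))\bigr)$, which is exactly the claimed formula.

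Two points deserve care rather than calculation. The assertion that $F_\nu$ is a \emph{function} on $\cal M$ (the ``memory-less'' claim) follows precisely from uniqueness of $\sigma$: then $\sigma=\sigma(\phi)$, and hence $\partial_\nu\sigma$ and $F_\nu$, depend only on the instantaneous point $\phi$ and not on the traversed path — this is the path-independence noted in Section~\ref{sec:holonomy}. One must also ensure $\lin^{-1}(\partial_\nu\sigma)$ is well-defined and trace class so that the pairing with the bounded $X$ makes sense: the inclusion $\partial_\nu\sigma\in\ran\lin$ comes from reading $d\sigma=\qran\,d\sigma$ component-wise, invertibility of $\lin$ on $\ran\lin$ is the gap condition (Assumption~\ref{ass:gap}), and trace-class-ness follows from Prop.~\ref{prop:trace} together with boundedness of $\lin^{-1}$ on the range. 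The only genuine ``obstacle'' is thus bookkeeping the remainders in trace norm; and since $\dim\pker=1$ furnishes an $O(\e^\infty)$ error in the $\pker$-sector, there is ample room for the $O(\e^2)$ remainder to survive.
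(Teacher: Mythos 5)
Your proposal is correct and takes essentially the same route as the paper: the paper's own proof is a one-line appeal to the adiabatic expansion, Eqs.~(\ref{AdiabaticXpansion}) and (\ref{AdiabaticXpansionQ}), together with Prop.~\ref{prop:trace} for trace-class-ness of $\partial_\nu\sigma$, which is precisely the argument you spell out. The extra details you supply (the $\pker/\qran$ split, the chain rule $\dot\sigma=(\partial_\nu\sigma)\dot\phi^\nu$, and the well-definedness of $\lin^{-1}$ on $\ran\lin$ via the gap condition) are exactly what the paper leaves implicit, and they are all sound.
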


\begin{proof} 
This is a direct consequence of the adiabatic expansion, Eq.~(\ref{AdiabaticXpansion}).  
$\partial_\nu\sigma$ is trace class by Prop.~\ref{prop:trace}. \end{proof}

The first term $\tr(X\sigma)$ is  of $O(1)$, and describes the {\em persistent response}, a property of the  stationary state. The second term is the {\em driven response} which is proportional to the driving $\e \dot\phi$, the (unscaled) velocity of the controls.

 One is often interested in situations where $F(\phi)$  is constant on $\cal M$. This feature depends on additional structure (e.g. thermodynamic limit, disorder \cite{Bellissard, ASS90,AizenmanGraf}). Observe that the expression for $F(\phi)$  involves inverting $\lin$, an operator with a non-trivial kernel, and so is not completely elementary. 

\begin{rem}
The formula for the response coefficient $F_\nu$ can be cast in a way that is formally reminiscent of Kubo's formula:
	\[
	F_\nu(\phi)=-\lim_{\e \downarrow 0}\int_0^\infty dt\, \tr \bigl(( e^{(\lin^*_\phi-\e)t}  X)\ ( \partial_\nu\sigma(\phi))\bigr).
	\]
\end{rem}

When the manifold of stationary states is multidimensional, the persistent response has memory and $F$ can not be viewed  anymore as functions on control space $\cal M$  (see Section~\ref{sec:holonomy}). 
%%%%%%%%%%%%%%%%%%%%%%%%%%%%%
%%%%%%%%%%%%%%%%%%%%%%%%%%%%%%
\section{Response of fluxes}\label{sec:fluxes}

 In the case of observable which are fluxes several simplifications occur:  There is no persistent response and the formula for $F_\nu$ simplifies and becomes elementary. If, in addition, the extremal stationary states are isolated  (Section~\ref{sec:holonomy}) then, in addition, $F$ defines a function on $\cal M$.

By definition, a flux (which is not explicitly time dependent,  Assumption~\ref{ass:partial}) can be written as
	\be
	\e \dot X= \lin^*X=\lin^*\qran^* X, \quad \Bigl( \dot X= \frac{dX}{ds}\Bigr),
	\ee
where the replacement $X$ by $\qran^*X$ relies on Eq.~(\ref{QL}) and is only of interest in  the infinite dimensional case where $X$ is unbounded while $\qran^*X$ is bounded.

\begin{thm}[Response of fluxes]\label{thm:fluxes}
Suppose 
%Assumptions~\ref{ass:smoothness} hold;
  that $\sigma(\phi)$ is an integral of parallel transport and that the flux $\dot X$ and $\qran^*X$ are bounded operators.
 %(this is automatic in the finite dimensional case)  
Then, to leading order, the response is memory-less, linear in the driving and given by
	\[
	\tr\bigl(\dot X  \rho\bigr)(s)= F_{\nu}(\phi)\, \dot \phi^\nu + O(\e), \quad \Bigl(\dot X= \frac{dX}{dt}\Bigr)
	\]
(summation implied) and where the response coefficient
	\[
	F_{\nu}(\phi)=%\e\, \tr \left(X_\mu \lin^{-1}(\partial_\nu\sigma)\right)=
   \tr \bigl((\qran^*X) \partial_\nu\sigma\bigr)
	\]
is a function on ${\cal M}$.
\end{thm}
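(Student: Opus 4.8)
The plan is to evaluate $\tr(\dot X\rho)(s)$ directly from the defining relation $\dot X=\lin^*(X)$ together with the adiabatic expansion of $\rho$. The first step is to tame a possibly unbounded $X$: using the gap identity $\lin^*=\lin^*\qran^*$ of Eq.~(\ref{QL}) I would write $\dot X=\lin^*(X)=\lin^*\bigl(\qran^*(X)\bigr)$, which by hypothesis involves only the bounded observable $\qran^*(X)$. The trace duality $\tr\bigl(\lin^*(Y)\rho\bigr)=\tr\bigl(Y\lin\rho\bigr)$ is then legitimate, and I obtain $\tr(\dot X\rho)=\tr\bigl(\qran^*(X)\,\lin\rho\bigr)$, where the right side pairs a bounded operator with the trace-class operator $\lin\rho$.

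Next I would substitute the adiabatic expansion, Eqs.~(\ref{AdiabaticXpansion})--(\ref{AdiabaticXpansionQ}), in the form $\rho=\sigma+\e\,\lin^{-1}\dot\sigma+O(\e^2)$. Applying $\lin$ annihilates the stationary part, $\lin\sigma=0$, while $\lin\lin^{-1}\dot\sigma=\dot\sigma$ because parallel transport, Eq.~(\ref{parTran}), guarantees $\dot\sigma=\qran\dot\sigma\in\ran\lin$, so that $\lin^{-1}$ acts on an element of its domain. Hence $\lin\rho=\e\,\dot\sigma+O(\e^2)$, and (after restoring the overall normalisation relating $\dot X$ to $\lin^*(X)$) the leading term is purely driven: the velocity-independent part that could have survived is $\tr(\dot X\sigma)=\tr\bigl(\qran^*(X)\lin\sigma\bigr)=0$ by the No-currents Proposition~\ref{NoCurrents}. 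This is the structural fact that fluxes carry no persistent response.

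It then remains to read off the surviving term. Writing $\dot\sigma=\partial_\nu\sigma\,\dot\phi^\nu$ gives, to leading order, $\tr(\dot X\rho)=\tr\bigl(\qran^*(X)\,\partial_\nu\sigma\bigr)\dot\phi^\nu$, which identifies the response coefficient $F_\nu=\tr\bigl(\qran^*(X)\,\partial_\nu\sigma\bigr)$; note that $\partial_\nu\sigma=\qran\,\partial_\nu\sigma$ again by parallel transport, so $F_\nu=\tr(X\,\partial_\nu\sigma)$ whenever $X$ itself is bounded. Because $\sigma(\phi)$ is assumed to be an integral of parallel transport, it is a genuine function of $\phi$ on $\cal M$ rather than a path-dependent object; consequently $F_\nu$ is a function on $\cal M$ and the response is memory-less, which is exactly the remaining claim.

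The main obstacle is analytic rather than algebraic: justifying the trace manipulations and the order of the remainder when $X$ is unbounded. The hypotheses that $\qran^*(X)$ and $\dot X$ are bounded are precisely what license the pairing $\tr\bigl(\qran^*(X)\lin\rho\bigr)$ and the replacement $\lin^*=\lin^*\qran^*$; I would also invoke the boundedness of $\lin^{-1}\dot\sigma$ (which follows from $\dot\sigma\in\ran\lin$ and the gap condition, as noted after Eq.~(\ref{AdiabaticXpansionQ})) and control the adiabatic remainder uniformly in $s$, so that substituting the expansion inside the trace yields a genuine, not merely formal, error. Tracking the slow-time versus fast-time normalisation of $\dot X$ and $\dot\phi$ is the only additional care needed to match the stated $\e$-powers.
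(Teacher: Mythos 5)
Your overall route is the same as the paper's: tame $X$ via $\lin^*=\lin^*\qran^*$ (Eq.~(\ref{QL})), move $\lin^*$ across the trace, insert the adiabatic expansion, and use parallel transport to identify $F_\nu=\tr\bigl(\qran^*(X)\,\partial_\nu\sigma\bigr)$ as a function on $\cal M$. However, one intermediate claim is false in exactly the case this theorem is designed for. You substitute the expansion ``$\rho=\sigma+\e\,\lin^{-1}\dot\sigma+O(\e^2)$'', but by Eq.~(\ref{AdiabaticXpansion}) the $\pker$ component of $\rho$ equals $\sigma$ only up to $O(\e)$ when $\dim\pker\ge 2$ (the multidimensional manifold of stationary states, which is the whole point of Section~\ref{sec:fluxes}); the $O(\e^\infty)$, hence $O(\e^2)$, accuracy you invoke holds only for $\dim\pker=1$. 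Using the correct expansion naively, $\lin\rho$ picks up the extra term $\lin\bigl(\pker\rho-\sigma\bigr)$, which on the stated estimate is of the same order as the signal $\e\,\dot\sigma$; the identification of $F_\nu$, and in particular the memory-less character of the response, would then not follow, since that $O(\e)$ correction to $\pker\rho$ is history dependent.

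The gap is closed by an observation you do not make: $\pker\rho-\sigma$ lies in $\ran\pker=\ker\lin$, so it is annihilated by $\lin$ \emph{exactly}, not merely to higher order; equivalently, $\tr\bigl(\lin^*(\qran^*X)\,\pker\rho\bigr)=\tr\bigl(\qran^*(X)\,\lin\pker\rho\bigr)=0$ by $\lin\pker=0$, Eqs.~(\ref{LP}), (\ref{PL}). This is how the paper proceeds: it never expands $\pker\rho$ at all, but cites Eq.~(\ref{comm}) to discard that component and uses only the $\qran$ expansion, Eq.~(\ref{AdiabaticXpansionQ}), so that
\[
\tr\bigl(\dot X\rho\bigr)=\e^{-1}\tr\bigl(\lin^*(\qran^*X)\,\qran\rho\bigr)=\tr\bigl(\lin^*(\qran^*X)\,\lin^{-1}\dot\sigma\bigr)+O(\e)=\tr\bigl(\qran^*(X)\,\partial_\nu\sigma\bigr)\dot\phi^\nu+O(\e).
\]
With that one-line repair (and the bookkeeping of the $\e$-normalisation of $\dot X$ versus $\dot\phi$, which you already flagged) your argument coincides with the paper's proof.
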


\begin{proof}
From Eqs.~(\ref{comm}, \ref{AdiabaticXpansionQ}) we have
\[
\tr\bigl(\dot X  \rho\bigr)=\e^{-1}\tr\bigl((\lin^*\qran^*X)\rho\bigr)= \tr \left(\bigl(\lin^*\qran^*X\bigr) \lin^{-1}\dot\sigma\right)=
	 \tr \left((\qran^*X) \partial_\nu\sigma\right)\dot\phi^\nu. 
	\]
%Thm.~\ref{RC} and Prop.~\ref{NoCurrents} imply the vanishing of the persistent current at equilibrium states. Furthermore, since $(\qran^*X)$ and $\lin^*$ a e bounded and $\partial_\nu\sigma$ trace class we can shift $\lin^*$ around
%	\[\label{responseGeneral}
%	F_{\nu}=%\e\, \tr \left(X_\mu \lin^{-1}(\partial_\nu\sigma)\right)=
%    \tr \left((\lin^*\qran^*X) \lin^{-1}(\partial_\nu\sigma)\right)=
%	 \tr \left(\qran^*(X) \partial_\nu\sigma\right). 
%	\]
\end{proof}

The observation that one can sometimes avoid computing Green functions in linear response  is at the heart of the TKNN formula for the Hall conductance \cite{TKNN}. 
 
%We shall now turn to describing situations where one can even spare differentiating $\sigma$.

%%%%%%%%%%%%%%%%%%%%%
\subsection{Geometric magnetism for iso-spectral families}

The response coefficients of  an iso-spectral family generated by $G_\mu$ are naturally organized as a matrix $F_{\mu\nu}$, relating the response of the flux of $G_\mu$ to the driving $\dot\phi^\nu$. The  analog of the formula in Prop.~\ref{RC} is 
	\[
	\tr\bigl(\dot G_\mu  \rho\bigr)(s)= F_{\mu\nu}(\phi)\, \dot \phi^\nu + O(\e).
	\]
Combining Theorem~\ref{thm:fluxes} and Eq.~(\ref{ULocalResponse})  we get for the response matrix
\begin{align}\label{lie}
F_{\mu\nu}=\tr( (\qran^*G_\mu) \partial_\nu\sigma)
=-i\, \tr((\qran^*G_\mu) [G_{\nu},\sigma]).
\end{align}
We are now ready to state our {\em main result:}

\begin{thm}[Geometric response]\label{thm:LieResponse}
Suppose $G_\mu$ are bounded and $\sigma(\phi) = U(\phi) \sigma U^*(\phi)$ is an integral of parallel transport. Then the response matrix is antisymmetric and given by
\begin{equation} \label{lie2}
F_{\mu\nu}
= -i\tr([G_\mu,G_\nu]\sigma).
\end{equation}
If, moreover, $\sigma$ is a projection $P$
then $F$ is the adiabatic curvature of the bundle (see Appendix~\ref{appendix:a}):
	\be\label{chern}
	F_{\mu\nu}=i \,\tr\left({ P}_\perp[ \partial_\mu P,\partial_\nu P]\right).
	\ee

\end{thm}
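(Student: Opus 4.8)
The plan is to establish the two claims in sequence, starting from the formula $F_{\mu\nu}=-i\,\tr(\qran^*(G_\mu)[G_\nu,\sigma])$ already derived in Eq.~(\ref{lie}). First I would prove the antisymmetric formula (\ref{lie2}). The natural route is to split $G_\mu = \pker^*(G_\mu)+\qran^*(G_\mu)$ and observe that the $\pker^*$ part contributes nothing: since $\pker^*(G_\mu)=\id\cdot\tr(G_\mu\sigma)$ is (super)proportional to the identity in the unique-state case, and more generally lies in $\Ker\lin^*$, one has $\tr(\pker^*(G_\mu)[G_\nu,\sigma])=\tr(G_\nu[\sigma,\pker^*(G_\mu)])$, and $[\sigma,\id]=0$; I would verify that the analogous cancellation holds for the integral-of-parallel-transport case using $\pker(\sigma)=\sigma$ together with Eq.~(\ref{parTran}). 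Hence $\qran^*(G_\mu)$ may be replaced by $G_\mu$ in Eq.~(\ref{lie}), giving $F_{\mu\nu}=-i\,\tr(G_\mu[G_\nu,\sigma])$. The antisymmetry and the commutator form then follow from the cyclicity of the trace: $\tr(G_\mu[G_\nu,\sigma])=\tr(G_\mu G_\nu\sigma - G_\mu\sigma G_\nu)$, and symmetrizing in $(\mu,\nu)$ collapses the $\sigma$-sandwiched terms so that only $\tr([G_\mu,G_\nu]\sigma)$ survives.

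For the second claim I would specialize to $\sigma=P$ a projection, $P^2=P$. Here I would use the iso-spectral relation $\partial_\nu P = -i[G_\nu,P]$ (Eq.~(\ref{ULocalResponse})), together with the standard projection identities $P(\partial_\nu P)P=0$ and $\partial_\nu P = (\partial_\nu P)P + P(\partial_\nu P)$. The strategy is to rewrite $-i\,\tr([G_\mu,G_\nu]P)$ purely in terms of $\partial_\mu P$ and $\partial_\nu P$. Substituting $[G_\mu,P]=i\partial_\mu P$ into the commutator expression and expanding $\tr([G_\mu,G_\nu]P)$, the $G$'s can be traded for derivatives of $P$ by repeatedly inserting $\id = P + P_\perp$ and using cyclicity; the off-diagonal (in $P$ versus $P_\perp$) structure of $\partial_\mu P$ is what makes the $G$-dependence drop out, leaving the manifestly gauge-invariant curvature $i\,\tr(P_\perp[\partial_\mu P,\partial_\nu P])$. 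I would cross-check the placement of $P_\perp$ against the convention fixed in Appendix~A so that signs and the $P_\perp$ versus $P$ insertion match the stated curvature formula (\ref{chern}).

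The main obstacle I anticipate is the first reduction, namely justifying that the $\pker^*$ component of $G_\mu$ drops out \emph{when $\dim\pker\ge 2$}. In the unique-state case this is immediate from Eq.~(\ref{UGS}), but in the general integral-of-parallel-transport setting $\pker^*(G_\mu)$ is no longer a multiple of the identity, and I must use the parallel-transport condition $\pker(\partial_\nu\sigma)=0$ in the form $\tr(\pker^*(G_\mu)\,\partial_\nu\sigma)=\tr(G_\mu\,\pker(\partial_\nu\sigma))=0$ to kill the extra term before passing to Eq.~(\ref{lie2}). A secondary subtlety is purely bookkeeping: in the commutator manipulations one must consistently exploit $\tr(\sigma[G_\mu,G_\nu]) = \tr(G_\nu[\sigma,G_\mu]) = \ldots$ without accidentally assuming $[G_\mu,G_\nu]$ is itself a flux, since $F_{\mu\nu}$ need not vanish. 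I would therefore keep the trace identities explicit rather than appealing to ``no currents,'' and only at the very end invoke $\sigma=P$ to obtain the curvature form.
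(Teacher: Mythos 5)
Your proposal is correct and follows essentially the same route as the paper: drop $\qran^*$ in Eq.~(\ref{lie}) via the duality $\tr\bigl(\qran^*(G_\mu)\,\partial_\nu\sigma\bigr)=\tr\bigl(G_\mu\,\qran(\partial_\nu\sigma)\bigr)$ together with the parallel-transport condition $\pker(\partial_\nu\sigma)=0$, pass to the commutator form $-i\tr([G_\mu,G_\nu]\sigma)$ by cyclicity, and for $\sigma=P$ trade the $G$'s for derivatives of $P$ by inserting $\id=P+P_\perp$ and using $P_\perp G_\mu P=iP_\perp\partial_\mu P$, $PG_\mu P_\perp=-i\,\partial_\mu P\, P_\perp$. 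The only difference is cosmetic: the paper performs your ``main obstacle'' reduction silently (writing $F_{\mu\nu}=\tr(G_\mu\,\partial_\nu\sigma)$ outright), and the justification you spell out is precisely the one it relies on.
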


For unitary evolutions, the first part of the theorem reduces  to a (special case of) result of Berry and Robbins \cite{BerryRobbins}, who coined the term  {\em geometric magnetism} for the anti-symmetric part of $F$. 

The second part of Theorem~\ref{thm:LieResponse}  extends the geometric interpretation of response matrix from the unitary case \cite{ASY87} to open systems. The conditions in the theorem are satisfied for Lindbladians representing relaxation to the ground state and dephasing Lindbladians whose initial state is a spectral projection.

\begin{proof}
The conditions have been set so that the formal manipulations are justified
\begin{align*}
F_{\mu\nu}=\tr( G_\mu \partial_\nu\sigma)
=-i\, \tr(G_\mu [G_\nu,\sigma])= -i\tr([G_\mu,G_\nu]\sigma),
\end{align*}
where in the second equality we used Eq.~(\ref{ULocalResponse}).
%Proposition~\ref{IsoParTran}.

For the second part observe that the equation $-i[G,P] = \dot P$ implies
$$ P_\perp G P = iP_\perp \dot P \quad \mbox{and} \quad PGP_\perp = -i \dot P P_\perp .$$
Hence
\begin{align*}
F_{\mu \nu} &= -i\tr([G_\mu,G_\nu] P)  =-i \tr(P G_\mu P_\perp G_\nu P - P G_\nu P_\perp G_\mu P) \\
            &= -i\tr(\partial_\mu P P_\perp \partial_\nu P - \partial_\nu P P_\perp \partial_\mu P) \\
            &= i\tr(P_\perp [\partial_\mu P, \partial_\nu P]),
\end{align*}
where the first line is a readily checked identity.
\end{proof}

%This says that,  for iso-spectral families, the response matrix  of fluxes  is always anti-symmetric. 

\begin{rem}
More details about the dephasing case are in Appendix~\ref{appendix:a} where a formula for response when $\sigma(\phi)$ is not the integral of parallel transport is given.
\end{rem}

% In $F_{\mu\nu}$ of Eq.~(\ref{lie2})
%The generators $G_\mu$ are, of course, oblivious to the Lindbladian. 
%The information on the dynamics encoded in $H$ and $\Gamma$ only resides in the stationary state $\sigma$. 
%In particular,
In may happen that $[G_\mu,G_\nu]$ is proportional to the identity. 
The transport coefficients  are then purely geometric and independent of the dynamics. 
An example  is the Hall mobility:
% This example also strictly lies outside the general framework, since it involves unbounded %operators. However, here too one can verify that the formal manipulations are in fact justified. 
%\addtocounter{exa}{8}
%\addtocounter{exa}{4}
\begin{exa}[Hall mobility]%: Example~\ref{exa:landau} continued
\label{exa:landau2}
 The {\em Landau Lindbladian},  (see Eqs.~(\ref{landau}) and (\ref{bath}) of  Appendix \ref{sec:BCurrents})
is a  model of a quantum particle  in two dimensions, moving under the influence of a uniform, perpendicular, magnetic field, $B$,  in contact with a thermal bath. 
The corresponding controlled Lindbladian  is the iso-spectral family generated by 
\[
 G_{\mu} = B^{-1}\e_{\mu \nu} v_\nu,\quad [v_\mu,v_\nu]=i\e_{\mu\nu} B.
\]
The virtual work associated with the variation of the controls $\delta \phi^\mu$ is, as  explained in Appendix~\ref{sec:BCurrents}, the velocity $\dot x_\mu$, while  $\dot \phi^\nu$ is interpreted as an electric field. The transport coefficient relating  velocity to field strength is the mobility, given by
	\be
	F_{\mu\nu}=-i[G_\mu,G_\nu]= \frac{\e_{\mu\nu}}B
	\ee
and independent of $\gamma_\pm$ and the stationary state\footnote{ The Landau Lindbladian in the plane has $\dim \pker =\infty$. One can avoid this by considering the model   on the torus (with appropriate boundary conditions).}.
%(The insensitivity of the Hall conductance to temperature is a
%pathology of the model.)  In the units
%we use $2\pi$ is the quantum unit of conductance.

\end{exa}
%\begin{figure}[htbp]
%\begin{center}
%\includegraphics[width=0.4\textwidth]{torusFluxes.png}
%\caption{A two dimensional torus threaded by two Aharonov Bohm flux tubes carrying %fluxes $\phi_{1,2}$ that serve as controls.  }
%\label{fig:torus}
%\end{center}
%\end{figure}
 %If $[G_\mu,G_\nu]$ is proportional to the identity the response coefficient does not depend on the state $\sigma$.
By extension the model describes a gas of independent particles of density $\rho$ and conductance $\sigma_{\mu\nu}=\rho F_{\mu\nu}$. If the density corresponds to filling factor 1, i.e. to one particle per unit magnetic flux, then $\rho=B/2\pi$, and the conductance $\sigma_{\mu\nu}=(2\pi)^{-1}\e_{\mu\nu}$ is quantized in the same units as in the unitary case. 
\begin{rem}[Hall viscosity]
 Theorem~\ref{thm:LieResponse} can be used to recover, and generalize, results of Read and Rezayi \cite{Read} on the Hall viscosity:  viscosity is the study of the Landau Lindbladian under the iso-spectral family generated by shears.  The commutator of shears in two dimensions is the generators of rotation.
Theorem~\ref{thm:LieResponse} then relates the Hall viscosity with the expectation of the angular momentum per particle.
\end{rem}

%%%%%%%%%%%%%%%%
\subsection{Friction and dissipation}
\label{sec:dissipation}

The fact that $F_{\mu\nu}$ of Eq.~(\ref{lie2}) is anti-symmetric does not imply the absence of dissipation. It only says that  looking at the response of fluxes is not appropriate for the study of dissipation.
To explain this statement consider the dissipation associated with the dragging of the anchoring point of a (damped)  oscillator coupled to a heat bath at velocity $\dot\phi$. The response coefficient relating force to velocity is friction. As there are three forces in the problem---the momentum rate, the force on the anchoring point, and the friction force---there are also three friction coefficients. The friction coefficient associated with the momentum rate vanishes, but the others do not.
%%%%%%%%%%%%%%%%%%%%%%%%%%%%%%%%%%

\begin{exa} [Friction: Example~\ref{exa:oscillator} continued]
%Damped oscillator:  
%Example~\ref{exa:oscillator} continued
%]
The  (unbounded) generator of shifts is, $p=i(a^*-a)/\sqrt 2$.  With $\sigma$ a thermal state of the oscillator, 
	$\partial_\phi\sigma= -i[p,\sigma]$
is trace class and Eq.~(\ref{lie2}) applies with $G_\mu=G_\nu=p$. 
The friction coefficient vanishes, as it must by anti-symmetry.
\end{exa}
The  momentum rate vanishes because it can not disentangle the heat lost to the bath from the mechanical work done by  the anchoring point.  To study dissipation it is not enough to look at the response coefficients of fluxes, nor is it enough to examine the energy of the system.
%%%%%%%%

%%%%%
Indeed, the energy of the (small) system, in the adiabatic limit, is
	\be
	E= \tr (H\rho)= \tr (H\sigma) +O(\e )
	\ee
by Eq.~(\ref{AdiabaticXpansion}). In particular, for an iso-spectral family the energy is constant (to leading order) when $H$ and $\sigma$ undergo the same unitary transformation, as is the case  in the example of the damped oscillator. The energy  does not reveal the dissipation.

To reveal the dissipation one needs to look at the breakup of the energy to work and heat. The variation of the energy 
 \begin{equation}
\label{1.law}
   \delta\, \tr(H\rho)= \tr(H\,\delta \rho)+ \tr(\delta H\, \rho)
    \end{equation}
expresses the first law of thermodynamics \cite{LebowitzSpohn}
	\[
	\delta E= \delta W+\delta Q=\bigl(
\tr (\sigma\,\partial_\mu H )+\tr (H\,\partial_\mu \sigma )\bigr)\delta\phi+O(\e).
	\]
To compute the friction one needs to study the expectation of the spring force $-x$ rather than the momentum flux $\lin^*p$. (More generally, $\partial_\mu H$ rather than the flux  $\lin^*G_\mu$ of Eq.~(\ref{Q}).)

In general,  the computation of $\tr( \rho\, \partial_\mu H)$ is  complicated   for two reasons: First, one needs to evaluate $\lin^{-1}$. Second, in the case that the ground state is non-unique, it also needs the explicit expression for the $O(\e)$ term in the adiabatic expansion, Eq.~(\ref{AdiabaticXpansion}), which are history dependent. 
For dephasing Lindbladian such a computation is given in \cite{AFGK}. We shall not pursue this direction here.
 
As a sanity check, let us derive the first law of thermodynamics using the  tools of the previous sections.
Since $H$ is explicitly time-dependent Assumption~\ref{ass:partial} does not hold for $H$, its flux is now made of two terms:
	\be
	\dot H= (\partial_\mu H) \dot \phi^\mu + \lin^*H.
	\ee
Substituting in Theorems~\ref{RC}, \ref{thm:fluxes} indeed reproduces the first law: 
	\be
	\frac{d E}{dt}=\bigl( \tr( \sigma\, \partial_\mu H)+ \tr (H\, \partial_\mu\sigma)\bigr) \Bigl( \frac{d\phi^\mu}{dt}\Bigr) + O(\e^2).
	\ee

%%%%%%%%%%%%%%%%%%%%%%%%%%%%%%%%%%%%%%%%%%%%

\section{Concluding remarks}

We have derived a simple and general formulas for the adiabatic response coefficients for  observable of the form $\dot X=\lin^*X$. In the case of iso-spectral families of Lindbladians, the response matrix is determined by geometry and is purely anti-symmetric. We find a range of circumstances where the response coefficients are given by the adiabatic curvature of the associated stationary projections. It will be interesting to  extend the theory  to  models of extended systems with (non-interacting) fermions.

{\bf Acknowledgments.}
JEA is supported by the ISF, the NSF under Grant No. PHY11-25915 and the fund for promotion of research at the
Technion. 
 MF was supported by UNESCO and ISF. We thank M. Porta for useful discussions.

\appendix
%%%%%%%%%%%%%%%%%%

%%%%%%%%%%%%%%%%%%%%
\section{Currents in a magnetic field}\label{sec:BCurrents}
%%%%%%%%%%%%%%%%%%%%%%%%%%%%%%%%%
% \begin{exa}[Landau Hamiltonian]\label{exa:landau}
%

The action of a magnetic field on charged particles endows the dynamics with chirality. This has interesting consequences for currents. 
Consider the Lindbladian describing a charged particle in the plane under the influence of a constant magnetic field coupled to a heat bath. 
%(Strictly, this example too, lies outside the general framework, since it involves unbounded operators.)
 The Hamiltonian is the Landau Hamiltonian 
%\footnote{ Strictly speaking, this example lies outside the scope of our framework since it involves unbounded operators.  However, it is  simple  enough that one can check that formal manipulations are indeed justified.}:
	\be\label{landau}
	H= D^* D,  \quad D= -i\partial_1+\partial_2 + B x_2\equiv v_1+iv_2, \quad(v_\mu=v^*_\mu) ,
	\ee	
%%%%%%%%
and the thermal bath is represented by  (cf.  Example~\ref{exa:decay})
\be\label{bath}
\Gamma_-=\sqrt{\gamma_-} D, \quad \Gamma_+ = \sqrt{\gamma_+} D^*, \quad (\gamma_->\gamma_+\ge 0).
\ee

We shall call the generator of the corresponding evolution a thermal Landau Lindbladian. The model has a current density associated to the total current Eq.~(\ref{TotQ}).
\begin{prop}
The  (total) current density of the Landau Lindbladian of Eqs.~(\ref{landau}, \ref{bath}) is
\begin{equation}
\label{cudensity}
 j_\mu(x_0) = \{\rho(x_0), v_\mu\}- (\gamma_+ + \gamma_-) \partial_\mu \rho(x_0)  + (\gamma_- - \gamma_+)  \e_{\mu \nu} \{\rho(x_0),v_\nu\} .
\end{equation}
The charge density is $\rho(x_0)=\delta(\cdot-x_0)$ and $\e_{\mu\nu}$ is the completely anti-symmetric (Levi-Civita) tensor. 
 The (total) current satisfies charge conservation:
\be\label{conservation}
\partial_t \rho=-\partial_\mu j_\mu.
\ee

%\end{exa}
\end{prop}
Before proving the statement, let us comment about its content. By Eq.~(\ref{conservation}) the total current is unique up to a curl.  The Hamiltonian current is proportional and parallel to the velocity $2v_\mu$. The dissipative current has a (non-chiral) diffusive term proportional to the gradient of the density and a further chiral term. The dissipative currents can be interpreted in terms of Brownian motion (see below).

\begin{proof}
The dissipative terms of the Lindbladian are
$$
\mathcal{D}^*_\pm(X) = \Gamma_\pm^*[ X ,\Gamma_\pm]+ [ \Gamma_\pm^*, X ]\Gamma_\pm.
$$
For a function $X = f(x_1,x_2)$ of position we have
\begin{equation}
\label{xspace}
i[H,f]=\{v_\mu,\partial_\mu f\},
\quad
\mathcal{D}^*_\pm(f) = \gamma_\pm  \laplace f \mp 2\gamma_\pm (\partial_\mu f) \e_{\mu \nu} v_\nu.
\end{equation}
Eq.~(\ref{Irate}) then gives the dual form of the statements of the proposition, namely,
\[
\partial_t f =\int j_\mu(x_0)\partial_\mu f(x_0)d^2x_0,\quad f=\int \rho(x_0)f(x_0)d^2x_0.
\] 
\end{proof}
%%%%%%%%%%%%
%%%%%%%%%%%%%%%%%%%%%%%%
\subsection{Stochastic interpretation}
%%%%%%%%%%%%%%%%%%%%%%%%
The dissipative currents  admit an interpretation in terms of a (classical) stochastic process.
To see this note first that
 for functions $X = f(v_\mu)$ of either velocity ($\mu=1,2$) 
\begin{equation}
\label{pspace}
\mathcal{D}^*_\pm(f) = \gamma_\pm B^2 f'' \pm 2\gamma_\pm B f' v_\mu 
\end{equation}
which can be read as if originating from a (exciting or damping) Langevin equation
$$
dv_\mu = \pm 2\gamma_\pm  B v_\mu dt + B db_{\mu,t},
$$
where $b_\mu$ is a Brownian motion with zero drift and variance
$$
\mathbb{E}(db_{\mu,t} db_{\nu,t}) =2\gamma_\pm  \delta_{\mu \nu} dt.
$$
In fact, expanding $\mathbb{E}(f(v_\mu+dv_\mu))$ to first order in $dt$ and to second order in $db_t$ yields that expression.

To derive the Langevin equation for $dx$ we first note that the guiding center $(r_1,r_2)$,
\begin{equation}
\label{guiding}
r_\mu = x_\mu + \frac{\e_{\mu \nu}}{B} v_\nu
\end{equation}
satisfies $[r_\mu,\,v_\nu] = 0$ and thus is a constant of motion for the Lindbladian, $\lin^*r_\mu =0$. 
Insisting on $r_\mu$ being a constant of motion, we have
\begin{equation}
\label{xlangevin}
d x_\mu = -\frac{\e_{\mu \nu}}{B} d v_\nu = 
\e_{\mu \nu}( \mp 2\gamma_\pm v_\nu dt - db_{\nu,t}).
\end{equation}
 In view of $\mathbb{E}(\e_{\mu \nu} db_{\nu,t} \e_{\mu' \nu'} db_{\nu',t}) =2\gamma_\pm \delta_{\mu \mu'}dt$ this is the Langevin equation corresponding to  Eq.~(\ref{xspace}). (Beware: $d x_\mu\neq v_\mu dt$.)
%%%%%%%%%%%%%%%%%%%%%%%%%%%%%
%
%%%%%%%%%%%%%%%%%%%%%%%%%%%%
%%%%%%%%%
We can now combine $\lin^*r_\mu =0$ with Theorem~\ref{thm:VW} to conclude
\begin{prop}[Velocity as virtual work]\label{IsoLandau}
The (negative) virtual work associated with the variation $\delta\phi^\mu$ of the iso-spectral family of Landau Lindbladians generated by 
	\be\label{GLH}
	 G_{\mu} = B^{-1}\e_{\mu \nu} v_\nu.
	\ee
is the velocity $\dot x_\mu$. 			
\end{prop}

$G_\mu$ is the generator of the unitary family $U(\phi)$ given by
	\be
	(U_\phi\psi)(x_1,x_2)= \psi(x_1+\phi_2/B,x_2-\phi_1/B) e^{i\phi_2 x_2}.
	\ee
%%%%%%%%%%%
%%%%%%%%%%%
The physical interpretation of the controls emerges by noting that 
	\be\label{fluxes}
         U_\phi v_\mu U^*_\phi = v_\mu - \phi_\mu.
	\ee
Since $\phi_\mu$ appears in $H$ like a pure gauge field, its variation in time, $-\dot\phi$ is a constant electric field that drives the system. 
%%%%%%%%%
%%%%%%%%
\begin{rem}[Gauge covariance]
The proposition may be viewed as a manifestation of
gauge and translation covariance, in the sense that $-i\partial_\mu$
and $x_\mu$ appear in the Lindbladian only through the minimal coupling expression $v_\mu$. The virtual work associated with the variations  generated by $-G_\mu$ is the same as the variation generated by $x_\mu$. This follows from 
%%%%%%%%%
%%%%%%%%%  
\[
-\frac{\partial}{\partial\phi_\mu}U_\phi v_\nu U^*_\phi=i[v_\nu, x_\mu];
\]
in fact both sides equal $\delta_{\mu\nu}$.
\end{rem}
%Differentiating  gives the  ``Ward identity''
%\margin{check signs}
%\[
%\frac{\partial}{\partial\phi_\nu}\Big(U_\phi v_\mu U^*_\phi\Big)=- \delta_{\mu\nu}
%\]
%\margin{please improve}
%%%%%%%%%%%%%%%%%%

\section{Geometry of projections}\label{appendix:a}

Consider  continuous orthogonal projections $P_j(\phi)$ with $\sum P_j(\phi) = \id$. 
The superprojection that takes $\rho$ to $\ran \pker = \mathrm{Span} \{ P_j \}$ is, Eq.~(\ref{PD}),
$$
\pker\rho= \sum_j P_j \rho P_j.
$$
We are going to describe parallel transport inside $\ran \pker$ \cite{Kato50}.

For a given path $P_j(t)$, parallel transport $\dot u=\dot P_jP_j u$ maps 
vectors $u(0)$ in the range of $P_j(0)$ to vectors $u(t)$ in that of $P_j(t)$. That map $U(t)$ is unitary and generated by 	
\be\label{K}
	K:=i \dot U U^*= \sum_j A_j, \quad A_j= i\dot P_{j}P_j.
	\ee
In fact $K^*=K$, since $A_j^*=-iP_j\dot P_{j}=-i\dot P_{j}(1-P_j)=-i\dot P_{j}+A_j$, and, for $U$ so defined, $u(t)=U(t)u(0)$ satisfies 
\[
\dot u=-iKu=\sum_j\dot P_{j}P_ju,
\]
as required. And for $\sigma(t) = U(t) \sigma U^*(t)$ the parallel transport equation~(\ref{parTran}), $\pker(t) \dot \sigma(t) = 0$, holds true.

When $\mathrm{dim} P_j = 1$, the parallel transport is manifestly path independent.
In general, this is determined by the standard condition of vanishing curvature:
\begin{prop}\label{integrability}
Let 	${\cal A}= (d \pker) \pker$ be an operator valued 1-form.  The differential equation	
\[
d\sigma ={\cal A} \sigma
\]
admits a (locally path independent) solution $\sigma$ if and only if the curvature vanishes
	\be\label{Curv}
	{\cal R}=0,\quad {\cal R}=-i\pker \,d\pker \wedge d\pker \, \pker. 
	\ee
\end{prop}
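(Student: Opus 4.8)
The plan is to read $d\sigma={\cal A}\sigma$, with ${\cal A}=(d\pker)\pker$, as the parallel-transport equation of a linear connection on the bundle over $\cal M$ whose fibre at $\phi$ is $\ran\pker(\phi)$, and then to invoke the classical fact that such a connection admits locally path-independent (flat) sections through each point if and only if its curvature two-form vanishes. The substantive work is to identify that curvature endomorphism with ${\cal R}$. As preparation I would record the two differentiated idempotent identities $\pker(d\pker)\pker=0$ and $\qran(d\pker)\qran=0$, obtained by differentiating $\pker^2=\pker$ and $\qran^2=\qran$ and using $\pker+\qran=\id$. Together with the parallel-transport condition $\pker(d\sigma)=0$ these show that the constraint $\pker\sigma=\sigma$ is preserved along solutions (since $d(\pker\sigma-\sigma)=(d\pker)\sigma+\pker d\sigma-d\sigma=0$), so the equation genuinely defines a connection on $\ran\pker$.

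Necessity is then immediate: applying $d$ to $d\sigma={\cal A}\sigma$ and using $d^2=0$ yields the compatibility condition ${\cal F}\sigma=0$ with ${\cal F}=d{\cal A}-{\cal A}\wedge{\cal A}$, which must hold for every solution, hence ${\cal F}\pker=0$. For sufficiency I would appeal to the Frobenius theorem: in components $A_\mu=(\partial_\mu\pker)\pker$, the statement ${\cal F}\pker=0$ is precisely the integrability condition $\partial_\mu A_\nu-\partial_\nu A_\mu-[A_\mu,A_\nu]=0$ on $\ran\pker$, which guarantees local existence of solutions for arbitrary initial data, with path-independence.

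The key computation is the evaluation of ${\cal F}\pker$. Using the Koszul sign in $d\bigl((d\pker)\pker\bigr)=-(d\pker)\wedge(d\pker)$ together with the cancellation $({\cal A}\wedge{\cal A})\pker=(d\pker)\,[\pker(d\pker)\pker]=0$ from the first identity, I get ${\cal F}\pker=-(d\pker)\wedge(d\pker)\pker$. Splitting the leftmost factor with $\id=\pker+\qran$ and inserting $\id$ in the middle, the identities $\pker(d\pker)\pker=0$ and $\qran(d\pker)\qran=0$ make the $\qran$-part drop out, giving
\be
{\cal F}\pker=-\pker(d\pker)\wedge(d\pker)\pker=i{\cal R}.
\ee
Hence ${\cal R}=0$ if and only if ${\cal F}\pker=0$, which by the above is equivalent to local path-independence of $d\sigma={\cal A}\sigma$.

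I expect the main obstacle to be the curvature computation itself: getting the form signs right and, more importantly, showing that the apparently weaker condition ${\cal R}=0$ (carrying only a \emph{left} $\pker$) is genuinely equivalent to the full integrability condition ${\cal F}\pker=0$. This rests on the nontrivial cancellation $\qran(d\pker)\wedge(d\pker)\pker=0$, which in turn is exactly where the two idempotent identities are needed. The only conceptual, rather than computational, step is the clean statement of the Frobenius equivalence between flatness and existence of local path-independent solutions through each point; everything else is bookkeeping with the projections $\pker$ and $\qran$.
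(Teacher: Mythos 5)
Your proposal is correct in substance, and the comparison here is somewhat one-sided: the paper gives \emph{no} proof of Proposition~\ref{integrability} at all. It invokes it as the ``standard condition of vanishing curvature'' (in the spirit of Kato's parallel transport, cited at the head of the appendix), and the only related computation it performs is the downstream holonomy argument in Prop.~\ref{PathProp}, where ${\cal R}$ is read off from transport around an infinitesimal square. Your Frobenius-based argument --- necessity from $d^2=0$, sufficiency from the integrability theorem for the overdetermined linear system $\partial_\mu\sigma=A_\mu\sigma$ restricted to the subbundle with fibres $\ran\pker$, together with the identification ${\cal F}\pker=-(d\pker)\wedge(d\pker)\pker=\pker{\cal F}\pker$ --- is therefore a legitimate and essentially complete filling-in of what the paper leaves to the literature, and the cancellation $\qran(d\pker)\wedge(d\pker)\pker=0$ via the two idempotent identities is exactly the right key step.

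Two small repairs are needed. First, a sign: since ${\cal R}=-i\,\pker\, d\pker\wedge d\pker\,\pker$, your computation gives ${\cal F}\pker=-i{\cal R}$, not $i{\cal R}$; this is immaterial for the equivalence ${\cal R}=0\iff{\cal F}\pker=0$, but the displayed equation is wrong as written. Second, your constraint-preservation display is circular: $d(\pker\sigma-\sigma)=(d\pker)\sigma+\pker\,d\sigma-d\sigma=\pker(d\pker)\sigma$, which vanishes only if one already knows $\sigma=\pker\sigma$. The correct argument sets $u=\qran\sigma$ and computes $du=-\pker(d\pker)\sigma=-\pker(d\pker)u$, a linear equation along any path with $u(0)=0$, so $u\equiv 0$ by uniqueness. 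Finally, note that your (sensible) reading of the statement --- path-independent transport of \emph{every} initial state in $\ran\pker$ --- is actually forced for the necessity direction, since $\sigma\equiv 0$ is always a trivial path-independent solution; this matches the paper's intent, as Prop.~\ref{PathProp} demands that $R_{\mu\nu}$ commute with \emph{all} elements of $\ran\pker$. With these touch-ups your proof stands.
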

 It implies the following criterion for the case of parallel transport of projections.

\begin{prop}[Adiabatic curvature]
\label{PathProp}
The parallel transport constructed above is locally path independent if and only if the {\em adiabatic curvature} 
%\marginpar{changed sign in definition of curvature. Now eqs below OK}
 	\[
	R_{\mu \nu} := \partial_\mu K_\nu -\partial_\nu K_\mu +i [K_\mu,K_\nu]=-i \sum_{ j} {P _j}[ \partial_\mu P_j,\partial_\nu P_j]
	\]
commutes with all elements in $\ran \pker$.
\end{prop}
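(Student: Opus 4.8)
The plan is to separate the statement into a purely algebraic part --- the identity expressing $R_{\mu\nu}$ through the $P_j$ --- and a geometric part --- the equivalence between local path independence and the vanishing of the (adjoint) curvature on $\ran\pker$. First I would establish $R_{\mu\nu}=-i\sum_j P_j[\partial_\mu P_j,\partial_\nu P_j]$ by direct computation from the component form $K_\mu=i\sum_j(\partial_\mu P_j)P_j$ of Eq.~\eqref{K}. Differentiating, the second-derivative terms in $\partial_\mu K_\nu-\partial_\nu K_\mu$ cancel by the symmetry of mixed partials, leaving $-i\sum_j[\partial_\mu P_j,\partial_\nu P_j]$. The commutator term $i[K_\mu,K_\nu]$ is then reorganized with the projection calculus $P_jP_k=\delta_{jk}P_j$ and its derivatives (in particular $P_j(\partial P_j)P_j=0$ and $P_j(\partial P_k)=-(\partial P_j)P_k$ for $j\neq k$), which collapse the double sum and combine with the previous term to remove the complementary $(\id-P_j)$ piece of each $[\partial_\mu P_j,\partial_\nu P_j]$, yielding exactly the block-diagonal expression claimed. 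A useful by-product of this computation is that each $\partial_\mu P_j$ is off-diagonal relative to $P_j$, so $[\partial_\mu P_j,\partial_\nu P_j]$ is block-diagonal and $P_j[\partial_\mu P_j,\partial_\nu P_j]=P_j[\partial_\mu P_j,\partial_\nu P_j]P_j$; hence $R_{\mu\nu}$ is block-diagonal with respect to $\id=\sum_jP_j$ and maps $\ran\pker$ into itself.

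Next I would interpret $K=K_\mu\,d\phi^\mu$ as a Hermitian connection one-form on the finite-dimensional bundle determined by $\sum_jP_j=\id$, whose parallel transport operator is precisely the Kato map $U$ of Eq.~\eqref{K}; by definition its curvature two-form is $R_{\mu\nu}$, so transport of \emph{vectors} is flat iff $R_{\mu\nu}=0$. The point is that stationary states are transported by the adjoint action $\sigma\mapsto U\sigma U^*$, so the relevant object is the induced connection on operators, whose curvature is the commutator $[R_{\mu\nu},\,\cdot\,]$. Local path independence of $\sigma(\phi)$ is flatness of this induced connection: around an infinitesimal loop the holonomy is $U\approx\id-\tfrac{i}{2}R_{\mu\nu}\,d\phi^\mu\wedge d\phi^\nu$, so the conjugation $\sigma\mapsto U\sigma U^*$ is trivial to leading order exactly when $[R_{\mu\nu},\sigma]=0$; since $\sigma$ ranges over $\ran\pker$ this is the asserted commutation condition.

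To make the passage from the infinitesimal statement to the genuinely local one rigorous I would route the equivalence through Proposition~\ref{integrability} rather than through a loop expansion. On $\ran\pker$ one has $\mathcal A\sigma=(d\pker)\sigma$, so by that proposition local path independence is equivalent to $\mathcal R=-i\pker\,d\pker\wedge d\pker\,\pker=0$. It then remains to verify, by a short computation using the same projection identities and the block-diagonality of $R_{\mu\nu}$ noted above, that $\mathcal R$ restricted to $\ran\pker$ is (up to the combinatorial factor $\tfrac12$) the superoperator $\sigma\mapsto-i[R_{\mu\nu},\sigma]\,d\phi^\mu\wedge d\phi^\nu$, which turns $\mathcal R=0$ into the statement that $R_{\mu\nu}$ commutes with every element of $\ran\pker$.

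The main obstacle I anticipate is exactly this last reduction of the superoperator curvature $\mathcal R$ to the operator commutator $[R_{\mu\nu},\,\cdot\,]$: one must expand $\pker\,d\pker\wedge d\pker\,\pker$ acting on a block-diagonal $\sigma=\sum_jP_j\sigma P_j$ and show that the $\pker$-compressions conspire so that only the commutator of $\sigma$ with the diagonal curvature $-i\sum_jP_j[\partial_\mu P_j,\partial_\nu P_j]$ survives. This is the step where the projection calculus is doing real work, and it is where a sign or compression error would most easily slip in; everything else (the algebraic identity and the holonomy heuristic) is routine.
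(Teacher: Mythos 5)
Your proposal is correct and follows essentially the same route as the paper: the algebraic half is identical (the paper likewise computes $\partial_\mu K_\nu-\partial_\nu K_\mu=-i\sum_j[\partial_\mu P_j,\partial_\nu P_j]$ and $[K_\mu,K_\nu]=\sum_j {P_j}_\perp[\partial_\mu P_j,\partial_\nu P_j]$ and combines them to kill the complementary piece), and the geometric half likewise reduces local path independence to the vanishing-curvature criterion of Proposition~\ref{integrability} applied to ${\cal A}=(d\pker)\pker$.

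The one divergence is worth noting. The step you defer as the ``main obstacle'' --- identifying the superoperator curvature ${\cal R}_{\mu\nu}$ of Eq.~(\ref{Curv}) with $\rho\mapsto[R_{\mu\nu},\pker\rho]$ --- is exactly what the paper proves by the infinitesimal-holonomy argument that you include but set aside as heuristic. That argument is not merely heuristic: the Kato map $U$ of Eq.~(\ref{K}) is an exact solution of the transport equation $d\sigma=(d\pker)\sigma$ on $\ran\pker$, so the holonomy of $\sigma\mapsto U\sigma U^*$ around an infinitesimal square computes the curvature of ${\cal A}$ exactly; since that holonomy is conjugation by $\id-iR_{\mu\nu}\,d\phi^\mu d\phi^\nu+o(d\phi^2)$, one reads off ${\cal R}_{\mu\nu}\rho=[R_{\mu\nu},\pker\rho]$ with no expansion of $\pker\,d\pker\wedge d\pker\,\pker$ required. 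Your proposed direct expansion would also succeed --- your by-product observation that $R_{\mu\nu}$ is block-diagonal, $P_jR_{\mu\nu}=P_jR_{\mu\nu}P_j$, is precisely what makes the compressions work out --- but it is extra labor that the paper's reading of the curvature from the loop holonomy avoids.
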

%\begin{enumerate}
%\item 

\begin{proof} 
Parallel transport of a vector $u_j$ in the range of $P_j$ along an infinitesimal square $d\phi^\mu d\phi^\nu$ maps 
 $$ u_j \quad \to \quad u_j - i R_{\mu \nu} u_j \d\phi^\mu d\phi^\nu +o(d\phi^2).$$ The associated adjoint transformation maps the state $\sigma = P_j\sigma P_j $ as
 $$
 \sigma \quad \to \quad \sigma -i[R_{\mu \nu}, \sigma] \d\phi^\mu d\phi^\nu +o(d\phi^2).
 $$
This allows to read off the curvature ${\cal R}_{\mu \nu}$ of $\pker$ seen in Eq.~(\ref{Curv}): By ${\cal R}={\cal R}\pker$ we have 
\[
{\cal R}_{\mu \nu}\rho=[R_{\mu \nu}, \pker\rho].
\]
Hence the criterion of vanishing curvature states that $R$ commutes with all elements in $\ran \pker$.

Computation gives the commutator
	\begin{align}
	  [K_\mu,K_\nu]&%=-i \sum_{j\neq k} -P _j  (\partial_\mu P_k)(\partial_\nu P_k)+P_k (\partial_\nu P_j)(\partial_\mu P_j)\nonumber \\
	  %& =-i\sum_{ k} -{P _k}_\perp (\partial_\mu P_k)(\partial_\nu P_k)+{P_k}_\perp (\partial_\nu P_k)(\partial_\mu P_k) \\
	    %&
	    =\sum_{ j} {P _j}_\perp[ \partial_\mu P_j,\partial_\nu P_j]
	  \end{align}
as the sum of adiabatic curvatures of all the spectral projections. 
Since
	\[
	\partial_\mu K_\nu-\partial_\nu K_\mu=-i \sum_j [\partial_\mu P_j,\partial_\nu P_j]
	\]
one finds
	\[ 
\partial_\mu K_\nu -\partial_\nu K_\mu +i [K_\mu,K_\nu]=-i \sum_{ j} {P _j}[ \partial_\mu P_j,\partial_\nu P_j].
	\]
\end{proof}

For an iso-spectral family of projections $$P_j(\phi) = \exp(-i G \phi) P_j(0) \exp(i G \phi)$$ the generator of parallel transport, Eq.~(\ref{K}), is
$$K =G - \sum_j P_j G P_j = G - \pker^*(G), $$
since $\pker^*=\pker$ by Eq.~(\ref{PD}). While it does not coincide with $G$ it differs from it only inside $\ran \pker$
 \be \label{lastone}
\qran^*(G)  = \qran^*(K).
\ee

When $\mathrm{rank}\, P_j > 1$ and $\mathrm{dim}\, {\cal M} > 1$ the parallel transport can not be integrated in general. And the response coefficients are not functions on the manifold.

\begin{thm}
Suppose $G_\mu$ are bounded and $\lin$ is a dephasing Lindbladian. Then the 
response associated to the driving path $\phi(s)$ and flux $\dot G_\mu$ 
depends only on the integral of parallel transport $\sigma(\phi)$ and the 
derivative $ \delta \phi$ at the end point, 
$$
 \tr(\lin^*(G_\mu) \rho(s)) = F_{\mu \nu} \pd_\nu \phi(s),
$$
where
$$
F_{\mu \nu} = i\tr([K_\mu,K_\nu] \sigma).
$$
\end{thm}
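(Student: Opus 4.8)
The plan is to specialize the flux-response identity behind Theorem~\ref{thm:fluxes} to $X=G_\mu$ and then to exploit the block structure of a dephasing Lindbladian to rewrite the answer entirely in terms of the parallel-transport connection $K$.

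First I would pair the flux against $\lin\rho$ rather than against $\rho$ directly. Since $\lin^*=\lin^*\qran^*$ (Eq.~\ref{QL}) and by trace duality,
\[
\tr\bigl(\lin^*(G_\mu)\rho\bigr)=\tr\bigl(\qran^*(G_\mu)\,\lin\rho\bigr).
\]
Because $\lin\pker=0$ one has $\lin\rho=\lin\qran\rho$, so only the $\qran$-component of the state survives. Inserting the adiabatic expansion $\qran\rho=\e\lin^{-1}\dot\sigma+O(\e^2)$ (Eq.~\ref{AdiabaticXpansionQ}) gives $\lin\rho=\e\dot\sigma+O(\e^2)$ and hence
\[
\tr\bigl(\lin^*(G_\mu)\rho\bigr)=\e\,\tr\bigl(\qran^*(G_\mu)\,\dot\sigma\bigr)+O(\e^2).
\]
This one step already yields the memory-less assertion: the history-dependent $O(\e)$ correction to $\pker\rho$ that is present when $\dim\pker\ge2$ (Eq.~\ref{AdiabaticXpansion}) is annihilated by $\lin$, so a flux never feels it, and the only surviving object is $\dot\sigma$, which parallel transport ties locally to the instantaneous $\sigma$ and to $\pd_\nu\phi$.

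Next I would carry out the geometric reduction. For a dephasing Lindbladian $\sigma=\sum_jP_j\sigma P_j$ is block-diagonal and the parallel-transported state is generated by the connection $K$, so that $\dot\sigma=-i[K_\nu,\sigma]\,\pd_\nu\phi$. Since $K=G-\pker^*(G)$ is purely off-block-diagonal, Eq.~(\ref{lastone}) gives $\qran^*(G_\mu)=\qran^*(K_\mu)=K_\mu$. Substituting these two facts, combining the explicit $\e$ with the control velocity $\pd_\nu\phi$, and finally invoking the cyclicity identity $\tr\bigl(K_\mu[K_\nu,\sigma]\bigr)=\tr\bigl([K_\mu,K_\nu]\sigma\bigr)$ casts the response coefficient in the manifestly antisymmetric form $F_{\mu\nu}=i\,\tr([K_\mu,K_\nu]\sigma)$, which by Prop.~\ref{PathProp} is precisely the adiabatic curvature.

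I expect the main obstacle to be conceptual rather than computational: justifying that, even when parallel transport fails to integrate and $\sigma(\phi(s))$ is genuinely path-dependent, the response still collapses to a local expression in the endpoint data $\sigma$ and $\pd_\nu\phi$. The mechanism is exactly that $\lin$ kills the $\pker$-part of $\rho$, so the potentially history-laden $O(\e)$ term of $\pker\rho$ never enters a flux and all remaining memory is carried by $\sigma$ alone. The remaining care is in identifying $\dot\sigma$ with the parallel-transport derivative $-i[K_\nu,\sigma]$ along the path and in applying Eq.~(\ref{lastone}); the boundedness of $G_\mu$ makes $\qran^*(G_\mu)$ bounded and renders $\lin^{-1}\dot\sigma$ and all the traces well defined, so no unbounded-operator issues arise.
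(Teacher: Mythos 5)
Your proposal is correct and takes essentially the same route as the paper's own proof: you start from the flux-response formula $F_{\mu\nu}=\tr\bigl(\qran^*(G_\mu)\,\pd_\nu\sigma\bigr)$ of Theorem~\ref{thm:fluxes} (which you re-derive rather than cite, usefully making explicit that $\lin$ annihilates the history-dependent part of $\pker\rho$), then use $\qran^*(G_\mu)=\qran^*(K_\mu)=K_\mu$ from Eq.~(\ref{lastone}), the parallel-transport relation for $\dot\sigma$, and cyclicity of the trace --- exactly the paper's steps. The one caveat is a sign: your own intermediate relation $\dot\sigma=-i[K_\nu,\sigma]\,\pd_\nu\phi$ combined with cyclicity yields $F_{\mu\nu}=-i\,\tr([K_\mu,K_\nu]\sigma)$ rather than the $+i$ you quote, but this discrepancy is inherited from the paper itself, whose appendix statement ($+i$) is inconsistent with the main-text Theorem~\ref{thm:LieResponse} ($F_{\mu\nu}=-i\,\tr([G_\mu,G_\nu]\sigma)$, which agrees with your intermediate steps whenever both results apply).
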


\begin{proof}
\begin{align*}
F_{\mu\nu}&=\tr( Q^*(G_\mu) \partial_\nu\sigma) \\
&=i\, \tr(\qran^*(K_\mu) [K_\nu,\sigma])= i\tr([K_\mu,K_\nu]\sigma),
\end{align*}
where the second line express parallel transport and uses Eq.~(\ref{lastone}).
The last equality is by $\pker[K_\nu,\sigma]=0$, which characterizes parallel transport and by the way restates Eq.~(\ref{consistencyG}).
\end{proof}
%\addtocounter{exa}{2}
\begin{exa}[Taming $X$: Example~\ref{taming} continued]
%[Continued: Taming $X$]% Example~\ref{taming} continued]
Consider a family of Hamiltonians generated by a momentum shift
$$
e^{-i \phi X} H e^{i \phi X} = e^{i \phi} T \otimes a + e^{-i \phi} T^* \otimes a^*  = P_+(\phi) - P_-(\phi),
$$
where $X$ is the position operator. The generator of the parallel transport is 
\begin{align*}
K = i( \dot{P}_+P_+ + \dot{P}_-P_- )
  =-\frac{i}{4} [H,\dot{H}] = a^* a - \frac{1}{2}.
 \end{align*}
Although $K \neq X$, their difference commutes with the Hamiltonian.
Furthermore $K$ intertwines $P_\pm$,
$$
KP_+ = P_-K,
$$
which is equivalent to the statement that $K$ generates no motion inside $\ran P_\pm$, $P_\pm K P_\pm = 0$.
\end{exa}

\section{Currents and unbounded observables}\label{unbounded}

We discuss the precise meaning of Eq.~(\ref{X0atEqUb}) when $X$ is unbounded. We still assume that $H$ and $\Gamma_\alpha$ are bounded, while $X=X^*$ need not be. Yet, the commutators $[H,X]$ and $[\Gamma_\alpha, X]$, defined as quadratic forms on the domain $D(X)$ of $X$, are assumed bounded, and $\Gamma_\alpha D(X)\subset D(X)$. Then $\dot X=\lin^*(X)$ is a bounded operator by natural interpretation of Eq.~(\ref{Heisenberg}) in the sense of quadratic forms.

\begin{prop} \label{NoCurrentsUnb}
Under the stated conditions, $\tr(\dot X\sigma) =0$ for any (trace class) stationary state $\sigma$. Moreover, $Q^*(X)$ is well-defined as a bounded operator. It is given as a strong limit, $Q^*(X)=\slim_{n\to\infty} Q^*(X_n)$, by means of any sequence of bounded approximants $X_n$ with $X_n\varphi\to X\varphi$, ($\varphi\in D(X)$); finally $\lin^*(X)=\lin^*(Q^*(X))$.\end{prop}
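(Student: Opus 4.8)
The plan is to reduce every assertion to the bounded theory already established in Proposition~\ref{NoCurrents} together with the spectral gap of Assumption~\ref{ass:gap}, approximating $X$ by bounded operators $X_n$ with $X_n\varphi\to X\varphi$ for $\varphi\in D(X)$. The organizing observation is that, by the gap condition, $\lin^*$ restricts to a \emph{bijection} of $\ran\qran^*$ onto $\ran\lin^*=\ran\qran^*$ with bounded inverse, represented on $\ran\qran^*$ by the norm-convergent integral $(\lin^*|_{\ran\qran^*})^{-1}=-\int_0^\infty e^{t\lin^*}\,dt$, the decay being guaranteed by the gap. Hence, once we know that the bounded operator $\lin^*(X)$ lies in $\ran\qran^*$, the operator
\[
\qran^*(X):=(\lin^*|_{\ran\qran^*})^{-1}\bigl(\lin^*(X)\bigr)=-\int_0^\infty e^{t\lin^*}\bigl(\lin^*(X)\bigr)\,dt
\]
is a well-defined \emph{bounded} operator, and by construction $\lin^*(\qran^*(X))=\lin^*(X)$; this simultaneously yields the boundedness of $\qran^*(X)$ and the last identity of the proposition.

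First I would prove the no-current statement $\tr(\dot X\sigma)=0$, which is exactly the input $\lin^*(X)\in\ran\qran^*$ needed above (it says $\pker^*(\lin^*(X))=0$). Since $\lin^*(X)$ is genuinely bounded and $\sigma$ is trace class, the pairing $\tr(\lin^*(X)\sigma)$ is unambiguous, and for each bounded approximant one has $\tr(\lin^*(X_n)\sigma)=0$ by Proposition~\ref{NoCurrents}, so it suffices to pass to the limit. Writing out $\lin^*$ from Eq.~(\ref{LStar}) and evaluating the quadratic forms on the dense domain $D(X)$, the hypotheses that $[H,X]$ and $[\Gamma_\alpha,X]$ are bounded forms and that $\Gamma_\alpha D(X)\subset D(X)$ give $\langle\psi,\lin^*(X_n)\varphi\rangle\to\langle\psi,\lin^*(X)\varphi\rangle$ for $\psi,\varphi\in D(X)$, i.e.\ weak convergence $\lin^*(X_n)\to\lin^*(X)$; combined with a uniform norm bound on $\lin^*(X_n)$ this transfers to the trace pairing against the trace-class $\sigma$ and yields $\tr(\lin^*(X)\sigma)=0$.

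It then remains to identify $\qran^*(X)$ with the strong limit of $\qran^*(X_n)$. For bounded $X_n$ the Riesz projection gives $\qran^*(X_n)\in\ran\qran^*$ with $\lin^*(\qran^*(X_n))=\lin^*(X_n)$, whence $\qran^*(X_n)=(\lin^*|_{\ran\qran^*})^{-1}(\lin^*(X_n))$, the \emph{same} bounded map applied to $\lin^*(X_n)$ as in the definition of $\qran^*(X)$. Because the gapped semigroup $e^{t\lin^*}$ is assembled from left and right multiplications by the bounded operators $H,\Gamma_\alpha$, it carries strong limits of uniformly bounded sequences to strong limits, and the same holds for the norm-convergent integral defining $(\lin^*|_{\ran\qran^*})^{-1}$. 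Thus $\lin^*(X_n)\to\lin^*(X)$ propagates to $\qran^*(X_n)\to\qran^*(X)$ strongly, independently of the approximating sequence.

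The step I expect to be the real obstacle is the functional-analytic control in the no-current argument: for an \emph{arbitrary} approximating sequence the commutators $[H,X_n]$ and $[\Gamma_\alpha,X_n]$ need not be uniformly bounded, since they are not controlled by $[H,X]$ and $[\Gamma_\alpha,X]$ alone (as powers of $X$ already show). Establishing $\sup_n\|\lin^*(X_n)\|<\infty$ — which is what legitimizes both the passage to the limit in the trace and the strong-limit statement — must therefore genuinely exploit the quadratic-form hypotheses and the invariance $\Gamma_\alpha D(X)\subset D(X)$, restricting to approximants such as resolvent regularizations $X_n=X(1+X^2/n^2)^{-1}$ for which these commutators remain bounded.
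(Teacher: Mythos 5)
Your overall skeleton is the same as the paper's: regularize $X$ by bounded operators (the paper takes exactly your resolvent regularization, $X_n=X/(1+n^{-1}X^2)$), use the quadratic-form hypotheses to get convergence of $\lin^*(X_n)$ to $\lin^*(X)$ tested on $D(X)$, pair against the trace-class $\sigma$ to deduce the no-current statement from Prop.~\ref{NoCurrents}, and then produce $\qran^*(X)$ by applying the bounded inverse of $\lin^*$ on $\ran\qran^*$ to $\lin^*(X_n)$. Your concern about uniform norm bounds for arbitrary approximants is legitimate, and your proposed remedy (work with functions of $X$) is in effect what the paper does.

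There is, however, a genuine error in your organizing tool. The representation $(\lin^*|_{\ran\qran^*})^{-1}=-\int_0^\infty e^{t\lin^*}\,dt$ is \emph{not} ``guaranteed by the gap.'' Assumption~\ref{ass:gap} only requires $0$ to be an isolated point of the spectrum of $\lin$; it does not force the remaining spectrum to stay a positive distance from the imaginary axis. Take $\Gamma_\alpha=0$ with a finite-dimensional $H$ of distinct eigenvalues, or a dephasing Lindbladian in which some pair of levels is not separated by the $\Gamma_\alpha$'s: then the nonzero spectrum of $\lin^*$ lies \emph{on} the imaginary axis, $e^{t\lin^*}$ does not decay on $\ran\qran^*$, and your Laplace integral diverges, so the very definition of $\qran^*(X)$ in your proof collapses in cases the proposition is meant to cover. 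The inverse does exist as a bounded map under the gap condition alone, but it must be represented by the Riesz-type contour integral around the origin,
\[
-\frac{1}{2\pi i}\oint \frac{dz}{z}\,(\lin^*-z)^{-1},
\]
which is what the paper uses. A second, smaller mismatch: in your last step you feed into the inverse map only the \emph{weak} convergence of $\lin^*(X_n)$ (that is all your second paragraph establishes: forms on $D(X)$ plus uniform bounds), yet you invoke preservation of \emph{strong} limits. To get strong convergence of $\qran^*(X_n)$ you would first need strong convergence of $\lin^*(X_n)$, which is not automatic--- note that $H\varphi\notin D(X)$ in general, so $X_nH\varphi\to XH\varphi$ does not follow from the hypothesis---and requires a separate computation exploiting the algebraic structure of the regularized approximants. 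The paper itself stops at weak convergence of $\qran^*(X_n)$ at this point.
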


\begin{proof} There exist sequences $X_n$ as stated, e.g. $X_n=X/(1+n^{-1}X^2)$. The assumption states that the bounded operator $[H,X]$ is characterized by the property 
	\be
 (\varphi, [H,X]\psi)=(H\varphi, X\psi)-(X\varphi, H\psi),\qquad
(\varphi,\psi\in D(X))\,.
\ee
Hence $[H,X_n]\to [H,X]$ (weakly), and similarly for $[\Gamma_\alpha, X]$. Thus $\lin^*(X_n)\to\lin^*(X)$ (weakly). Using that $A_n\to A$ (weakly) and $B$ trace class imply $\tr(A_nB)\to \tr(AB)$, we conclude 
\be
\tr(\lin^*(X)\sigma)=\lim_{n\to\infty}\tr(\lin^*(X_n)\sigma)=0
\ee
by Eq.~(\ref{X0atEq}). Moreover, by Eq.~(\ref{QL}) we have $\lin^*\qran^*(X_n)\to \lin^*(X)$. We notice that $\lin^*$ is weakly continuous, and so are $(\lin^*-z)^{-1}$ and the inverse of $\lin^*$ on $\ran\lin^*=\ran\qran^*$, i.e. 
\be
(\lin^*)^{-1}=-\frac{1}{2\pi i} \oint\frac {dz}  {z}(\lin^* -z)^{-1}
\ee
in the notation of Eq.~(\ref{PKer}). As a result $\qran^*(X_n)$ is weakly convergent to a limit denoted $\qran^*(X)$, and the result follows. \end{proof}

%%%%%%%%%%%%%%%%%%%%%%%%%%%%%%%%%%%%%

\bibliography{lind}
\bibliographystyle{plain}

%%%%%%%%%%%%%%%%%%%%%%%%
\end{document}